\documentclass[11pt]{article}
\usepackage[margin=1in]{geometry}
\usepackage{cite,microtype,graphicx}
\usepackage{hyperref,aliascnt}
\usepackage{amsmath,amssymb,amsfonts,amsthm}

\newtheorem{theorem}{Theorem}

\newaliascnt{lemma}{theorem} \newtheorem{lemma}[lemma]{Lemma}
\aliascntresetthe{lemma} 

\newaliascnt{corollary}{theorem}
\newtheorem{corollary}[corollary]{Corollary}
\aliascntresetthe{corollary}

\newaliascnt{definition}{theorem}
\newtheorem{definition}[definition]{Definition}
\aliascntresetthe{definition}

\newcommand{\sharpp}{\(\#\mathsf{P}\)}
\newcommand{\NP}{\(\mathsf{NP}\)}

\title{Counting Polygon Triangulations is Hard}

\author{David Eppstein\\
Computer Science Department, University of California, Irvine}

\date{ }

\begin{document}
\maketitle  

\begin{abstract}
We prove that it is $\#\mathsf{P}$-complete to count the triangulations of a (non-simple) polygon.
\end{abstract}

\section{Introduction}

In 1979, Leslie Valiant published his proof that it is \sharpp-complete to compute the permanent of a 0--1 matrix, or equivalently to count the perfect matchings of a bipartite graph~\cite{Val-TCS-79}.\footnote{See \autoref{sec:sharpp} for the definitions of \sharpp{} and \sharpp-completeness.} This result was significant in two ways: It was surprising at the time that easy (polynomial time) existence problems could lead to intractable counting problems, and it opened the door to hardness proofs for many other counting problems, primarily in graph theory.
Beyond individual problems, several broad classifications of hard graph counting problems are now known. For instance, it is \sharpp-complete to count $k$-colorings of a graph or determine the value of its Tutte polynomial at any argument of the polynomial outside of a small finite set of exceptions~\cite{JaeVerel-MPCPS-90},
and all problems of counting homomorphisms to a fixed directed acyclic graph are either \sharpp-complete or polynomial~\cite{DyeGolPat-JACM-07}.

However, in computational geometry, only a small number of isolated problems have been shown to be \sharpp-complete or \sharpp-hard. These include counting the vertices or facets of high-dimensional convex polytopes~\cite{Lin-SIDMA-86},  computing the expected total length of the minimum spanning tree of a stochastic subset of three-dimensional points~\cite{KamSur-ANALCO-11}, and counting linear extensions of two-dimensional dominance partial orderings~\cite{DitPak-18}. Because of the rarity of complete problems in this area, other problems of counting geometric structures can only be inferred to be \NP-hard, in cases for which constructing the structure is \NP-hard~\cite{AlvBriCur-DCG-15,AsaAsaIma-JACM-86,SeaPil-EuroCG-17,Lub-SoCG-85,KanBod-IC-97}. There has been significant research on counting easy-to-construct non-crossing configurations in the plane, including matchings, simple polygons, spanning trees,  triangulations, and pseudotriangulations; however, the complexity of these problems has remained undetermined. Research on these problems has instead focused on determining the number of configurations for special classes of point sets~\cite{FlaNoy-DM-99,Anc-JCTA-03,KaiZie-SC-03}, bounding the number of configurations as a function of the number of points~\cite{AicHacHue-GC-07,AicOrdSan-JCTA-08,ShaWel-SICOMP-06,ShaShe-EJC-11,DumSchShe-SIDMA-13,ShaSheWel-JCTA-13,AicAlvHac-SoCG-16,SanSei-JCTA-03,AicHurNoy-CGTA-04,Sei-Comb-98,GarNoyTej-CGTA-00,AsiRot-CGTA-18},  developing exponential- or subexponential-time counting algorithms~\cite{Bes-CGTA-02,AlvSei-SOCG-13,Wett-JoCG-17,AlvBriCur-DCG-15,MarMil-SoCG-16,BroKetPoc-SICOMP-06}, or finding faster approximations~\cite{AlvBriRay-CGTA-15,KarLinSle-TCS-18}.

In this paper we bring the two worlds of \sharpp-completeness and counting non-crossing configurations together by proving the following theorem:

\begin{theorem}
\label{thm:main}
It is \sharpp-complete to count the number of triangulations of a given polygon.
\end{theorem}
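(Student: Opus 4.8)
The plan is to prove \autoref{thm:main} in the two standard steps, membership in \sharpp{} and \sharpp-hardness. Membership is routine: every triangulation of an $n$-vertex polygon is a subset of the $O(n^2)$ segments joining pairs of its vertices, and given such a subset one can check in polynomial time whether it is a valid triangulation --- that the segments are pairwise noncrossing, lie inside the polygon, and are maximal with these properties (equivalently, that they partition the polygon into triangles). A nondeterministic polynomial-time machine that guesses a subset of segments and runs this check has exactly as many accepting computations as the polygon has triangulations, so the counting problem lies in \sharpp{}. Note that for \emph{simple} polygons the count is computable in polynomial time by a standard dynamic program over subpolygons, so the non-simplicity of the input is essential to the hardness.

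For hardness I would give a polynomial-time reduction from a known \sharpp-complete counting problem, encoding each of its instances as a non-simple polygon whose self-touching boundary is used to carve out holes and thin channels. The most convenient source is one that is already planar in flavour --- counting the satisfying assignments of a planar CNF formula, or counting the independent sets of a planar graph --- so that the structure being simulated embeds in the plane without crossings; a non-planar source such as the permanent of a 0--1 matrix would additionally require a crossover gadget, which I expect to be the single hardest component. The polygon would be assembled from a few gadget shapes glued along shared edges: a \emph{choice gadget} admitting exactly two triangulations, which contributes a multiplicative factor of $2$ and plays the role of a Boolean variable; \emph{wire gadgets} that copy the state of a choice consistently across the plane in every triangulation, together with negation gadgets and branching junctions; and \emph{constraint gadgets} (clauses, or edges of the graph) whose set of triangulations depends on the states of the incident wires and, crucially, is \emph{empty} for the forbidden combination of states, so that a globally inconsistent or unsatisfying configuration contributes nothing to the count.

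The design is guided by the observation that the number of triangulations of the assembled polygon decomposes as a sum, over the joint states of all the wires, of the product of the numbers of compatible triangulations of the individual gadgets --- precisely the partition function of a planar tensor network --- and the gadgets are engineered so as to realize the local tensors needed to express the source problem. After dividing out the uniform multiplicative contributions of the gadgets, and if necessary recovering the answer by polynomial interpolation over a family of instances parameterized by the sizes of certain gadgets, the triangulation count yields the target number of satisfying assignments or independent sets, completing the reduction.

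The main obstacle is the gadget construction itself: one must exhibit explicit polygonal pieces whose triangulation counts realize the prescribed local behaviour \emph{exactly}, verify that no spurious triangulations slip in, and --- most delicately --- show that once the pieces are glued along their shared edges they interact only through the intended interface, so that the global count genuinely factors as the intended sum of products. Controlling the triangulations near the junctions where gadgets meet, and near the self-touchings of the boundary, is where the bulk of the technical effort will lie.
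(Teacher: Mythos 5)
Your membership argument matches the paper's (\autoref{lem:can-count}). For hardness, however, the gadget architecture you propose has a genuine gap that the paper's proof is specifically engineered to avoid. You ask for constraint gadgets whose set of triangulations is \emph{empty} for a forbidden interface state, and for wire gadgets that propagate a Boolean state consistently \emph{in every triangulation}. Neither exists in this naive form: every polygon, and every subpolygon left after committing to any set of non-crossing diagonals, admits at least one triangulation, so no local interface state can have zero completions; and nothing forces a triangulation to respect a ``wire'' at all---a triangulation can simply break the wire by using short local diagonals instead of the long state-carrying ones. The only constraint geometry gives you for free is that two \emph{crossing} long diagonals cannot both appear, and that is essentially the only constraint the paper uses.

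Concretely, the paper reduces from counting independent sets in 3-regular planar graphs via an intermediate problem (counting maximum-size non-crossing subsets of ``red--blue'' segment arrangements), and replaces each segment by a long thin gadget capped by concave ``lenses'' with $a$ or $a+b$ vertices. A segment is \emph{active} in a triangulation if some triangle spans lens to lens; active segments automatically form a non-crossing set (\autoref{lem:active-non-crossing}). Rather than forbidding bad configurations, the paper makes them numerically negligible: with $b=2n$ and $a=3n^2$, the number of triangulations whose active set is a maximum-size non-crossing subset with $r$ red segments is an exactly computable quantity $q_{n,r}$ (\autoref{lem:exact-count}), the $q_{n,r}$ are separated from one another by factors of $2^{\Theta(n)}$ (\autoref{cor:less-red}), and all remaining ``spurious'' triangulations contribute strictly less than one unit of the relevant $q_{n,r}$ (\autoref{lem:log-bad}). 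The answer is then decoded from the total count $N$ by repeated integer division and remainder by the $q_{n,r}$ in decreasing order of $r$---a polynomial-time counting reduction, not a parsimonious tensor-network factorization. This dominant-term, digit-extraction mechanism is the missing idea in your plan; your closing paragraph correctly locates the difficulty (controlling triangulations that ignore the intended interfaces) but does not supply the tool that resolves it.
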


The polygons constructed in our hardness proof have vertices with integer coordinates of polynomial magnitude. Necessarily, they have holes, as it is straightforward to count the triangulations of a simple polygon in polynomial time by dynamic programming~\cite{EpsSac-TOMACS-94,RaySei-EuroCG-04,DinQiaTsa-COCOON-05}.  Our proof strategy is to develop a polynomial-time counting reduction from the problem of counting independent sets in planar graphs  (here the independent sets are not necessarily maximum nor maximal), which was proved \sharpp-complete by Vadhan~\cite{Vad-SICOMP-01} (see also~\cite{XiaZhaZha-TCS-07}). We reduce counting independent sets in planar graphs to counting maximum-size non-crossing subsets of a special class of line segment arrangements, which we in turn reduce to counting triangulations.

\section{Preliminaries}

\subsection{Polygons and triangulation}

A \emph{planar straight-line graph} consists of finitely many closed line segments in the Euclidean plane, disjoint except for shared endpoints. The endpoints of these segments can be interpreted as the vertices, and the segments as the edges, of an undirected graph drawn with straight edges and no crossings in the plane. The \emph{faces} of the planar straight-line graph are the connected components of its complement (that is, maximal connected subsets of the plane that are disjoint from the segments of the graph). In any planar straight-line graph, exactly one unbounded face extends beyond the bounding box of the segments; all other faces are bounded. A segment of the graph forms a \emph{side} of a face if the interior of the segment intersects the topological closure of the face. As with any graph, a planar straight-line graph is $d$-regular if each of its vertices is incident to exactly $d$ line segments. If $G$ is any graph, we denote the sets of vertices or edges of $G$ by $V(G)$ or $E(G)$ respectively, and the numbers of vertices or edges by $|V(G)|$ or $|E(G)|$ respectively.

For the purposes of this paper, we define a \emph{polygon} $P$ to be a 2-regular planar straight-line graph in which there is a bounded face $\phi$ whose sides are all the segments of $P$.
If $\phi$ exists, it is uniquely determined from $P$. We call $\phi$ the \emph{interior} of $P$.
The connected components of the graph are necessarily simple cycles of line segments, exactly one of which separates $\phi$ from the unbounded face. If there is more than one connected component of the graph, we call the other components \emph{holes}, and if there are no holes we call $P$ a \emph{simple polygon}. We denote the number of vertices (or edges) of the polygon by $|P|$.

We define a \emph{triangulation} of a polygon $P$ to be a planar straight-line graph consisting of the edges of $P$ and added segments interior to $P$, all of whose vertices are vertices of $P$, 
partitioning the interior of $P$ into three-sided faces.
As is well known, every polygon has a triangulation. A triangulation of a polygon $P$ can be found in time $O(|P|\log |P|)$, for instance by constrained Delaunay triangulation, and this can be improved to $O(|P|\log h)$ for polygons with $h$ holes~\cite{BerEpp-CEG-95}.
The known exponential or sub-exponential algorithms for counting triangulations of point sets~\cite{AlvSei-SOCG-13,MarMil-SoCG-16}
can be adapted to count triangulations of polygons in the same time bounds.

\subsection{Counting complexity}
\label{sec:sharpp}

The complexity class \sharpp{} and the notion of \sharpp-completeness were introduced by Valiant~\cite{Val-TCS-79}.
\sharpp{} is defined as the class of functional algorithmic problems for which the desired output counts the accepting paths of some nondeterministic polynomial-time Turing machine.

\begin{lemma}
\label{lem:can-count}
Computing the number of triangulations of a polygon is in \sharpp.
\end{lemma}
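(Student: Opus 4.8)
The plan is to exhibit a nondeterministic polynomial-time Turing machine whose accepting computations are in bijection with the triangulations of the input polygon $P$. Since \sharpp{} is defined in terms of counting accepting paths, it suffices to show that (i) a triangulation can be encoded in polynomially many bits, (ii) a nondeterministic machine can guess such an encoding in polynomial time, and (iii) a deterministic polynomial-time verifier can check whether a guessed object is a genuine triangulation of $P$, accepting if and only if it is.

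For the encoding, I would observe that every added segment of a triangulation has both endpoints among the vertices of $P$, so a triangulation is determined by the set of chords it uses, and there are fewer than $|P|^2$ candidate chords; thus a triangulation is specified by a bit vector of length $O(|P|^2)$, which is polynomial in the input size (the input coordinates are integers, and the number of vertices is at most the bit length of the input). The nondeterministic machine simply branches on each of these $O(|P|^2)$ bits, producing one computation path per subset of candidate chords. Then the verifier must confirm that the chosen chords, together with $E(P)$, form a planar straight-line graph (no two segments cross except at shared endpoints, and no added segment crosses the boundary or passes through a hole) and that every bounded face contained in the interior of $P$ is a triangle. Checking pairwise non-crossing is $O(|P|^4)$ segment-intersection tests with exact integer arithmetic on the given coordinates; checking that all interior faces are triangles can be done by a standard planar-subdivision traversal (walking around each face using the rotational order of edges at each vertex) in polynomial time, or more simply by verifying that the added segments number exactly the amount forced by Euler's formula for a triangulation and that no interior face has a chord or boundary segment in its interior.

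The one point requiring a little care — and the closest thing to an obstacle — is making the geometric predicates (segment crossing, point-in-face, face-is-a-triangle) run in genuinely polynomial time with respect to the bit length of the coordinates; this is routine because all such predicates reduce to evaluating signs of determinants of matrices whose entries are the given integer coordinates, and these determinants have bit length polynomial in the input and can be computed exactly in polynomial time. Everything else is bookkeeping. Hence counting triangulations of a polygon lies in \sharpp{}, which is the claim of \autoref{lem:can-count}.
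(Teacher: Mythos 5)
Your proof is correct and takes essentially the same approach as the paper: guess the set of added edges nondeterministically and verify in polynomial time that they form a triangulation, so that accepting paths are in bijection with triangulations. The paper states this in one sentence; your elaboration of the encoding and the exact-arithmetic verification is a more detailed version of the same argument.
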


\begin{proof}
The output is the number of accepting paths of a nondeterministic polynomial-time Turing machine that guesses the set of edges in the triangulation, and verifies that these edges form a triangulation of the input.
\end{proof}

\sharpp-hardness and \sharpp-completeness are defined using reductions, polynomial-time transformations from one problem $X$ (typically already known to be hard) to another problem $Y$ that we wish to prove hard. Three types of reduction are in common use for this purpose:
\begin{itemize}
\item \emph{Turing reductions} consist of an algorithm for solving  problem $X$ in polynomial time given access to an oracle for solving problem $Y$.
\item\emph{Polynomial-time counting reductions} consist of two polynomial-time transformations: a transformation $\sigma$ that transforms inputs to $X$ into inputs to $Y$, and a second transformation $\tau$ that transforms outputs of $Y$ back to outputs of $X$.  The reduction is valid if, for every input $\chi$ to problem $X$, $\tau(Y(\sigma(\chi)))=X(\chi)$.
\item \emph{Parsimonious reductions} consist of a polynomial-time transformation from inputs of $X$ to inputs of $Y$ that preserve the exact solution value.
 \end{itemize}
 
A problem $Y$ is defined to be \sharpp-hard for a given class of reductions if every problem $X$ in \sharpp{} has a reduction to $Y$. $Y$ is \sharpp-complete if, in addition, $Y$ is itself in~\sharpp. Composing two reductions of the same type produces another reduction, so we will generally prove \sharpp-hardness or \sharpp-completeness by finding a single reduction from a known-hard problem and composing it with the reductions from everything else in \sharpp{} to that known-hard problem.

The reductions that we construct in this work will be polynomial-time counting reductions.
However, we rely on earlier work on \sharpp-completeness of graph problems that uses the weaker notion of Turing reductions. Therefore, we will prove that our geometric problems are \sharpp-complete under Turing reductions. If the graph-theoretic results are strengthened to use counting reductions (and in particular if either counting maximum independent sets in regular planar graphs or counting independent sets in planar graphs is \sharpp-complete under counting reductions) then the same strengthening will apply as well to counting triangulations. For the remainder of this paper, however, whenever we refer to \sharpp-hardness or \sharpp-completeness, it will be under Turing reductions.

\subsection{Counting independent sets}
\label{sec:count-independent}

Vadhan~\cite{Vad-SICOMP-01} proved that it is \sharpp-complete to count independent sets in planar graphs. Although not necessary for our results, it will simplify our exposition to use a stronger form of this result by 
Xia, Zhang, and Zhao~\cite{XiaZhaZha-TCS-07}.
They proved that it is also \sharpp-complete to count
the vertex covers in a connected 3-regular bipartite planar graph (subsets of vertices that touch all edges).
A set of vertices is a vertex cover if and only if its complement is an independent set, so it immediately follows that it is also \sharpp-complete to count independent sets in 3-regular planar graphs.

\section{Red--blue arrangements}

In this section we define and prove hard a counting problem that will be an intermediate step in our hardness proof for triangulations. It involves counting maximum-size non-crossing subsets of certain special line segment arrangements. Counting maximum-size non-crossing subsets of arbitrary line segment arrangements can easily be shown to be hard: It follows from the hardness of counting maximum independent sets in planar graphs~\cite{Vad-SICOMP-01} and from the proof of Scheinerman's conjecture that every planar graph can be represented as an intersection graph of line segments~\cite{Sch-PhD-84,ChaGon-STOC-09}. So the significance of the reduction that we describe in this section is that it provides arrangements with a highly constrained form, a form that will be useful in our eventual reduction to counting triangulations.

\begin{definition}
We define a \emph{red--blue arrangement} to be a collection of finitely many line segments in the plane (specified by the Cartesian coordinates of their endpoints) with the following properties:
\begin{itemize}
\item Each line segment is assigned a color, either red or blue.
\item Each intersection point of two segments is a proper crossing point of exactly two segments.
\item Each blue segment is crossed by exactly two other segments, both red.
\item Each red segment is crossed by exactly three other segments, in the order blue--red--blue.
\item The union of the segments forms a connected subset of the plane.
\end{itemize}
\end{definition}

See the right side of \autoref{fig:redblue} for an example.
We will be interested in the maximum-size non-crossing subsets of such an arrangement: sets of as many segments as possible, no two of which cross.

In a red--blue arrangement, consider the graph whose vertices represent line segments and whose edges represent bichromatic crossings (crossings between a red segment and a blue segment).
Then this graph is a disjoint union of cycles, each of which has even length with vertices alternating between red and blue.
We call the cycles of this graph \emph{alternating cycles} of the arrangement.

\begin{lemma}
\label{lem:rb-max}
Every red--blue arrangement has equal numbers of red and blue segments. If there are $r$ red segments (and therefore also $r$ blue segments), the maximum-size non-crossing subsets of the arrangement all have exactly $r$ segments. Within each alternating cycle of the arrangement, a maximum-size non-crossing subset must use a monochromatic subset of the cycle (either all the red segments of the cycle, or all the blue segments of the cycle).
\end{lemma}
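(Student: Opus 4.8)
The plan is to exploit the rigidity of the two kinds of crossings present in a red--blue arrangement: the bichromatic crossings, which by hypothesis arrange the segments into disjoint alternating cycles, and the monochromatic crossings, which by the blue--red--blue condition amount to exactly one red--red crossing incident to each red segment. The equal-count claim is then immediate: each alternating cycle has colors strictly alternating along it, hence contains equally many red and blue segments, and the alternating cycles partition the entire segment set (the graph of bichromatic crossings is $2$-regular), so the red and blue totals agree; call this common value $r$.

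For the size statement I would first prove the bound $\le r$. Any two segments that are consecutive along an alternating cycle share a (bichromatic) crossing, so a non-crossing subset $S$ meets each alternating cycle $C$ in an independent set of the even cycle graph on the $|C|$ segments of $C$ (each alternating cycle has length at least four, since two line segments cross in at most one point), and such an independent set has at most $|C|/2$ elements. Summing over all alternating cycles, whose sizes total $2r$, gives $|S|\le r$. For the matching lower bound I would use that, by hypothesis, no blue segment is crossed by another blue segment, so the set of all $r$ blue segments is itself non-crossing; hence the maximum is exactly $r$. Consequently any maximum-size non-crossing subset $S$ must achieve $|S\cap C| = |C|/2$ simultaneously for every alternating cycle $C$.

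It remains to see that such an $S$ is monochromatic within each cycle. The key combinatorial fact is that an independent set of size $k$ in an even cycle $C_{2k}$ must be one of the two alternating vertex classes of that cycle; I would prove this in the standard way, by partitioning the $2k$ vertices into $k$ consecutive pairs (an edge each) in each of the two possible ways, which forces $S$ to contain exactly one vertex of every pair and then propagates the choice around the cycle. Applying this inside an alternating cycle $C$ and using that its colors strictly alternate, the two alternating classes are precisely ``all the red segments of $C$'' and ``all the blue segments of $C$'', which is the desired monochromatic dichotomy. I expect this small fact about even cycles to be essentially the only real content of the lemma; note also that the red--red crossings (whether between segments of one alternating cycle or of two different ones) only impose additional constraints on non-crossing subsets, and therefore cannot weaken any of the inequalities above or alter the final dichotomy.
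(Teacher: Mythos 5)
Your proof is correct and follows essentially the same route as the paper: decompose the arrangement by alternating cycles, bound a non-crossing subset by half of each cycle, exhibit the blue segments as achieving the bound, and deduce the per-cycle monochromatic dichotomy. You simply spell out the even-cycle independent-set fact (and the harmlessness of the extra red--red crossings) that the paper asserts more tersely.
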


\begin{proof}
The equal numbers of red and blue segments in the whole arrangement follow from the decomposition of the arrangement into alternating cycles and the equal numbers within each cycle.
Within a single cycle, there are exactly two maximum-size non-crossing subsets, the subsets of red and of blue segments, each of which uses exactly half of the segments of the cycle. Therefore, no non-crossing subset of the whole arrangement can include more than $r$ segments (half of the total), and a non-crossing subset that uses exactly $r$ segments must be monochromatic within each alternating cycle. There exists at least one non-crossing subset of exactly $r$ segments, namely the set of blue segments.
\end{proof}

Given a connected 3-regular planar graph $G$, we will construct a red--blue arrangement $A_G$ from it. In overview, our construction begins by finding a straight-line drawing of $G$.
We then replace each vertex $v$ of $G$ by a twelve-segment alternating cycle (\autoref{fig:redblue}, lower left).
Each edge of $G$ incident to $v$ will have two red segments on either side of it, within the two faces bounded by that edge, and these segments form the six red segments of the alternating cycle.
Three blue segments, drawn near $v$ within the three faces incident to $v$, connect pairs of red segments within each face. Another three blue segments cross the three edges incident to $v$ (which are not themselves part of the arrangement) and connect the two red segments on either side of the edge.

\begin{figure}[t]
\includegraphics[width=\textwidth]{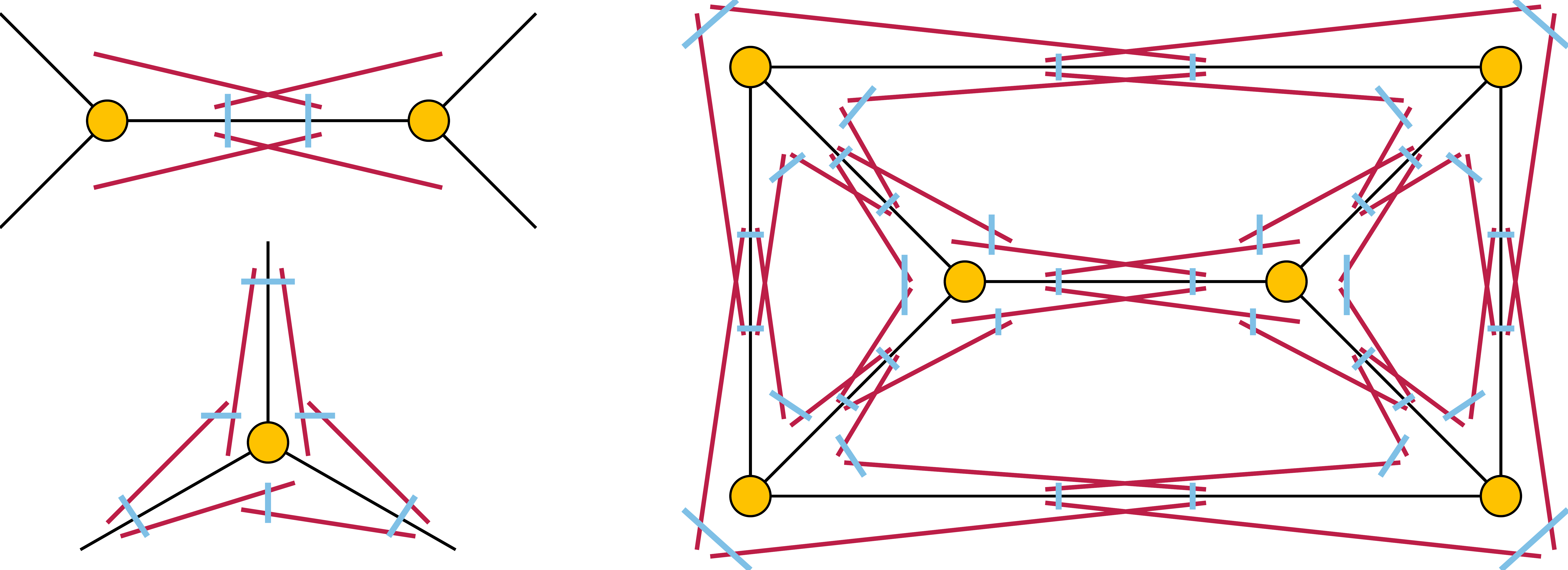}
\caption{Transformation from a 3-regular planar graph $G$ to a red--blue arrangement. Left: gadgets for transforming edges and vertices of $G$ into red and blue segments that will form parts of the arrangement. Right: an example of the complete red--blue arrangement for the graph of a triangular prism.}
\label{fig:redblue}
\end{figure}

In this way, each edge $uv$ of $G$ will have four red segments near it (two on each side from each of its two endpoints) and two blue segments crossing it (one from each endpoint). We arrange these segments so that, on each side of $uv$, the two red segments cross, and there are no other crossings except those in the alternating cycles (\autoref{fig:redblue}, upper left). An example of the whole red--blue arrangement produced by these rules is shown in \autoref{fig:redblue}, right.

In order to apply this drawing method as part of a counting reduction, it needs to be constructable in polynomial time, using coordinates that can be represented by binary numbers with a polynomial number of bits of precision. We will show more strongly that the coordinates themselves can be chosen to have polynomial magnitude (needing only a logarithmic number of bits of precision).
Thus, we now describe in more detail the steps of our construction.

The first of these steps is to find a planar straight line drawing of the given planar graph $G$. We use known methods to find a planar straight line drawing of $G$ within an integer grid of size $|V(G)|\times |V(G)|$, in linear time~\cite{Sch-SODA-90,FraPacPol-Comb-90}.

\begin{figure}[t]
\includegraphics[width=\textwidth]{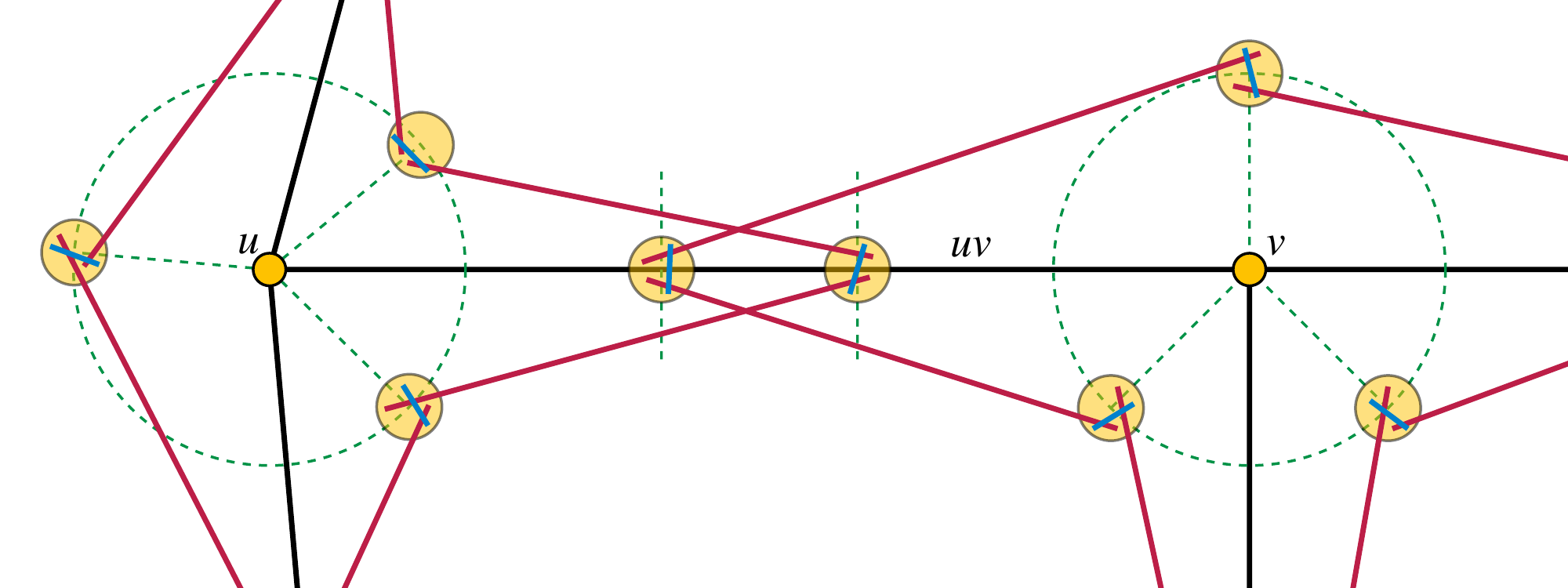}
\caption{Part of a graph ~$G$ including two vertices $u$ and $v$ (small yellow circles), the edge $uv$ between them, parts of other adjacent edges (black), the associated guide disks (large yellow circles), and a placement of red and blue segment endpoints within the guide disks.}
\label{fig:guidedisks}
\end{figure}

After constructing the drawing of $G$, the next step of our construction is to place ``guide disks'' in the drawing, two along each edge and three near each vertex (\autoref{fig:guidedisks}). These disks will each contain two endpoints of disjoint red segments in our eventual red-blue segment arrangement, and both endpoints of a blue segment that crosses these two red segments. We center the two guide disks on each edge at points $2/5$ and $3/5$ of the way between the two endpoints of the edge.
At each vertex $v$, three edges of $G$ are incident, forming the boundaries of three wedges between them. We place the three guide disks on the angle bisectors of these wedges, centered at a distance from $v$ equal to $1/5$ of the smallest nonzero distance from $v$ to any other edge or vertex of $G$. We choose all of these guide disks to have equal radii $\rho$ (depending on $|V(G)|$). We set $\rho$ to be small enough to meet the following constraints:
\begin{itemize}
\item No two guide disks intersect each other.
\item A guide disk on an edge of $G$ does not intersect any other edge or vertex of $G$.
\item A guide disk near a vertex of $G$ does not intersect any edge or vertex of $G$.
\item For every edge $uv$ of $G$, there does not exist a line segment in the plane that intersects the two guide disks on $uv$ and a third guide disk in one of the four wedges bounded by $uv$.
\end{itemize}

\begin{lemma}
There exists a radius $\rho=\Theta(1/|V(G)|^3)$ that satisfies all of the constraints above.
\end{lemma}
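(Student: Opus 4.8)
\emph{Proof strategy.} Write $n=|V(G)|$. The plan is to show that each of the four constraints holds as soon as $\rho$ is at most some explicit threshold of order $1/n^3$, and then to set $\rho=1/(Cn^3)$ for a large enough absolute constant $C$; such a value is $\Theta(1/n^3)$ by construction and lies below all four thresholds. Everything rests on three elementary facts about the straight-line drawing of $G$ on the $n\times n$ integer grid. First, distinct vertices are at distance at least $1$. Second, the distance from a vertex $p$ to a non-incident edge $ab$ is at least $1/(\sqrt2\,n)$: if $p$ lies off the line through $a$ and $b$, the distance to that line is a positive integer $|\det[\,b-a,\,p-a\,]|$ over a length $|b-a|\le\sqrt2\,n$, while if $p$ lies on the line it is at distance at least $1$ from the segment (being distinct from $a$ and $b$). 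Third, for a vertex $p$ with incident edges to $a$ and $b$ the sine of $\angle apb$ equals $|\det[\,a-p,\,b-p\,]|\,/\,(|a-p|\,|b-p|)$, which is at least $1/(2n^2)$ unless $p,a,b$ are collinear; from this (the collinear case making a wedge of angle exactly $\pi$) one checks that every wedge angle $|W|$ satisfies $\sin(|W|/2)=\Omega(1/n^2)$. The first two facts also give that the placement radius $d_v$ of a vertex guide disk---a fifth of the smallest nonzero distance from $v$ to any other edge or vertex---satisfies $d_v=\Omega(1/n)$ and $d_v\le\tfrac15|vw|$ for every neighbour $w$ of $v$.

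For constraints 1--3 the requirement is simply that a guide disk avoid another guide disk, an edge, or a vertex of $G$, which holds once $\rho$ (or $2\rho$, when two disks are involved) drops below the relevant separation. Running through the cases, the two tightest separations are between two guide disks at a common vertex and between a vertex guide disk and an edge bounding the wedge containing it; each has the form $(\text{distance from the vertex})\cdot\sin(\text{half the wedge angle})\ge d_v\sin(|W|/2)=\Omega(1/n)\cdot\Omega(1/n^2)=\Omega(1/n^3)$, where $d_v\le\tfrac15|vw|$ guarantees that the relevant perpendicular foot lands inside the edge in question. Every other pair of objects is $\Omega(1/n^2)$, $\Omega(1/n)$, or $\Omega(1)$ apart. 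So constraints 1--3 all hold once $\rho<\rho_{123}$ for some $\rho_{123}=\Omega(1/n^3)$.

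Constraint 4 is where the work lies, since it forbids not an intersection but a single segment meeting three disks. Fix an edge $uv$ and, by symmetry, suppose the third disk $D_3$ belongs to a wedge $W$ at $u$ bounded by $uv$; put the $x$-axis along $uv$ with $u$ at the origin and $v=(L,0)$, $L\ge1$. The two guide disks on $uv$ have centres $(\tfrac25L,0)$ and $(\tfrac35L,0)$, which lie at least $L/5\ge1/5$ apart on the axis, so any line meeting both has slope $O(\rho/L)$ in absolute value and $y$-intercept $O(\rho)$. The centre of $D_3$ has $x$-coordinate of absolute value at most $d_u\le L/5$ and vertical distance from the axis exactly $d_u\sin(|W|/2)$; a line of slope $O(\rho/L)$ therefore strays from the axis by only $O(\rho)$ over the relevant range of $x$, so it can come within $\rho$ of $D_3$'s centre only if $d_u\sin(|W|/2)=O(\rho)$. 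Hence constraint 4 holds whenever $\rho<\rho_4:=c\cdot\min_{u,W}d_u\sin(|W|/2)$ for a small enough constant $c$, and by the facts above $\rho_4=\Omega(1/n^3)$. Taking $\rho=\min(\rho_{123},\rho_4)$---equivalently $\rho=1/(Cn^3)$ for a suitable $C$---yields a radius of order $1/n^3$ satisfying all four constraints.

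I expect constraint 4 to be the one real obstacle. Bounding $D_3$'s horizontal offset by its worst case $\Theta(n)$ and its distance from the line $uv$ by $\Omega(1/n^3)$ independently would only yield $\rho=O(1/n^4)$; the point is that the two quantities are linked---a large offset of $D_3$ forces $uv$ to be long, which forces any line through the two disks on $uv$ to be nearly axis-parallel---and the clean inequality $d_u\le L/5$ makes the product of offset and slope collapse back to $O(\rho)$, recovering the sharp exponent $1/n^3$.
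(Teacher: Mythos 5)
Your proposal is correct and follows essentially the same route as the paper: the same three grid facts (unit vertex separation, $\Omega(1/n)$ vertex--edge distance, $\Omega(1/n^2)$ angles, which you derive via determinants where the paper invokes Pick's formula), the same observation that constraints 1--3 reduce to these, and the same key point for constraint~4 that the disk near $u$ sits at distance at most $L/5$ from $u$ while the on-edge disks sit at $2L/5$ and $3L/5$, so the stabbing-line geometry is well-conditioned and the exponent stays at $3$ rather than degrading to $4$. Your coordinate/slope computation is just an explicit rendering of the paper's ``all three triangle heights are within constant factors, hence all $\Omega(1/n^3)$'' argument, and you even handle the near-reflex wedge case (via $\sin(|W|/2)$) that the paper passes over silently.
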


\begin{proof}
Each two vertices in $G$ are at least at unit distance, because $G$ is on a grid.
The distance between each edge and non-incident vertex of $G$ is $\Omega(1/|V(G)|)$,
because the convex hull of the edge and vertex has area $\Omega(1)$ (by Pick's formula) and diameter $O(|V(G)|)$.
Each two incident edges form an angle of $\Omega(1/|V(G)|^2)$, because they form a triangle of area $\Omega(1)$ and diameter $O(|V(G)|)$ and because the area of a triangle is proportional to the product of its squared diameter with its sharpest angle.
So, for each guide disk associated with a vertex $v$, the distance of the disk center from $v$ is $\Omega(1/|V(G)|)$, and it will avoid intersecting the two nearby edges forming an angle of $\Omega(1/|V(G)|^2)$) as long as $\rho$ is a sufficiently small constant multiple of $1/|V(G)|^3$.
The same bound suffices to achieve all of the instances of the first three constraints. It remains to consider the final constraint.

This constraint may equivalently be stated as requiring the height of the triangle formed by certain triples of guide disk centers, above the longest edge of the triangle to be at least $2\rho$. In these triangles, for a triple of disk centers associated with vertex $u$ and edge $uv$, the center near $u$ has height $\Omega(1/|V(G)|^3)$ above line $uv$, which contains one of the sides of the triangle.
But in this triangle, all three edges have lengths within a constant factor of each other, so all heights are also within a constant factor of each other and are all $\Omega(1/|V(G)|^3)$. So when $\rho$ is a sufficiently small constant multiple of $1/|V(G)|^3$, the last constraint is also satisfied.
\end{proof}

We must next show how to choose the endpoints of the red and blue segments within each guide disk. To do so, we expand our drawing of $G$ (and its guide disks) by a factor of $\Theta(|V(G)|^3)$, enough to ensure that each guide disk contains an $8\times 8$ grid of points. We observe that (because of the small size of the guide disks relative to their distance apart) the angles of each red segment are known to within an additive error of $O(1/|V(G)|^3)$ regardless of where in their guide disks their endpoints are placed. Additionally, the angle between any two red segments that leave the same guide disk is $\Omega(1/|V(G)|^2)$, again regardless of where in their guide disks the segments are placed. We choose the endpoints of segments within each guide disk independently, using the following lemma.

\begin{lemma}
\label{lem:8x8suffices}
There exists a constant $\delta$ such that, for every two given slopes $\Theta_1$ and $\Theta_2$ and every  $8\times 8$ grid of points there exist two rays and a line segment with the following properties:
\begin{itemize}
\item One ray has slope $\Theta_1$ and the other has slope $\Theta_2$.
\item Both rays have their endpoint in the grid, and the line segment has both endpoints in the grid.
\item The two rays are disjoint from each other, and both are crossed by the line segment.
\item If the angles of the rays are continuously rotated by at most $\delta$ (while not passing through the same angle as each other) the rays remain disjoint from each other and crossed by the segment.
\end{itemize}
\end{lemma}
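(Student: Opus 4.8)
The plan is to give an explicit recipe for the two rays and the transversal and then extract a single stability margin $\delta$ at the end. The first point is that a \emph{slope} specifies a ray only up to reversal, so along each of the two given slopes we may choose either of the two opposite directions; I would choose them so that the angle $\beta\in[0,\pi]$ between the two chosen directions is as large as possible, namely $\beta=\pi$ (opposite directions, so the two rays never share an angle) when $\Theta_1=\Theta_2$, and $\beta\in[\pi/2,\pi)$ when $\Theta_1\ne\Theta_2$. Next I would fix the direction $\tau$ of the transversal from among a fixed finite list of directions realizable by a pair of points of the $8\times 8$ grid, for instance $0^\circ,45^\circ,90^\circ,135^\circ$, realized by grid segments of length about $7$ or $7\sqrt2$ that span most of the grid. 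Since these four directions are $45^\circ$ apart, each slope is within $15^\circ$ of at most one of them, so one can pick $\tau$ at angular distance at least $15^\circ$ from both slopes (and, in the regime below where it is needed, from one further forbidden direction as well). With $\tau$ bounded away from both ray directions by a fixed amount, each ray's supporting line crosses the transversal's supporting line transversally even after the rays are rotated by any amount less than $\delta<5^\circ$; since $p_1$, $p_2$ and the two endpoints of the transversal are all fixed grid points, ``the transversal crosses ray $i$ in its own interior at a positive ray parameter'' then becomes an open condition in the ray directions, hence stable.

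It remains to place the three segment endpoints so that the two rays are \emph{robustly} disjoint while each crosses the interior of the transversal. Here I would distinguish two overlapping regimes. When $\beta$ is bounded away from $\pi$ (say $\beta\le 170^\circ$, which in particular covers all perpendicular-ish pairs), the chosen directions $d_1$ and $d_2$ can be strictly separated by a line through the origin, with a margin depending only on that bound: there is a unit vector $n$ with $n\cdot d_1>0>n\cdot d_2$, each inequality holding with fixed positive slack. Placing $p_1$ and $p_2$ at interior grid points near the center of the grid with $n\cdot p_2<n\cdot p_1$ then confines ray $1$ to the half-plane $\{x:n\cdot x\ge n\cdot p_1\}$ and ray $2$ to $\{x:n\cdot x\le n\cdot p_2\}$, so the two rays are disjoint with a margin that survives perturbations of size $\delta$; keeping the origins away from the grid's extremes leaves room on both sides for a long grid segment in direction $\tau$ (chosen also to have a substantial $n$-component) to reach past $n\cdot p_1$ and below $n\cdot p_2$, hence to cross both rays in their interiors. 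In the complementary regime, where $\beta\ge 170^\circ$ (so $\Theta_1$ and $\Theta_2$ are equal or nearly so), I would put $p_1$ and $p_2$ on two distinct, nearly parallel lines of the two slopes, choosing the ray directions so that the intersection point of their supporting lines, if it exists, lies far outside the grid on the side ``behind'' both origins; then the two rays are disjoint, and remain so under $\delta$-perturbations because the (possibly relocated) intersection point stays far away and behind. Here a near-$\tau$ grid segment crossing the thin gap between the two rays provides the transversal.

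The step I expect to be the main obstacle, and the reason the grid is taken as large as $8\times 8$, is exactly this final packing problem: one must simultaneously keep both ray origins strictly interior to the grid (so that a transversal with grid-point endpoints can \emph{properly} cross each ray rather than merely touch it along the grid boundary), spread the two rays far enough apart to be robustly disjoint for every possible $\beta$, keep $\tau$ bounded away from both slopes and, in the separated regime, from the direction perpendicular to $n$, and still have, in direction $\tau$, a realizable grid segment long enough to meet both rays in their interiors, all with a single margin. Once the construction is arranged so that every inequality it uses holds with a fixed slack in the near-antiparallel regime and with a slack controlled by $\pi-\beta$ in the separated regime (the two regimes overlapping so that the controlled slack never degenerates where it is needed), a uniform $\delta$ follows. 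Alternatively, since the recipe's discrete data --- which grid points are used, which of the finitely many candidate transversal directions, and which sign patterns --- take only finitely many values as $(\Theta_1,\Theta_2)$ ranges over all pairs of slopes, one may simply let $\delta$ be the minimum of the finitely many per-case margins.
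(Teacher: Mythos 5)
Your plan points in a reasonable direction, but it has two genuine gaps. The first is the one you flag yourself at the end: the entire content of the lemma is the quantitative packing claim --- that for \emph{every} pair of ray directions there is a placement of the two ray endpoints and the two transversal endpoints inside an $8\times 8$ grid for which all the required incidences hold with a uniform margin $\delta$. You list the constraints such a placement must satisfy but never exhibit placements or verify that they fit; ``once the construction is arranged so that every inequality holds with fixed slack, a uniform $\delta$ follows'' restates the goal rather than proving it, and the appeal to ``finitely many per-case margins'' does not by itself rule out the margin degenerating as $(\Theta_1,\Theta_2)$ approaches a case boundary. The paper closes exactly this gap by an explicit finite case analysis: it partitions the circle of ray directions into eight $45^\circ$ arcs (bounded by the coordinate axes and the slope-$\pm 1$ lines), reduces to three configurations up to $45^\circ$ rotations of the grid (roughly opposite directions, roughly the same direction, roughly perpendicular directions), and for each exhibits a concrete picture fitting in a $3\times 8$, $2\times 5$/$5\times 5$, or $3\times 8$ subgrid in which the admissible wedge of directions for each ray strictly exceeds its $45^\circ$ arc; $\delta$ is then half of that fixed excess.

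The second gap is structural. You begin by replacing each slope with whichever of its two orientations makes the angle $\beta$ between the rays at least $\pi/2$, and your ``separating line'' regime depends on this: no line through the plane separates two rays pointing in nearly the same direction. But in the application the orientations are forced --- each red ray must leave its guide disk toward the specific other guide disk containing the segment's other endpoint --- and the paper's own proof accordingly treats the two directions as points on the full circle of directions, devoting its middle case to two rays pointing within $45^\circ$ of the \emph{same} direction. That case needs a qualitatively different placement (both rays exiting the same side, with the transversal cutting across both just beyond their endpoints), which neither of your two regimes produces. Read as the lemma is actually used, your construction therefore does not cover all cases, independently of the unfinished packing verification.
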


\begin{figure}[t]
\centering\includegraphics[width=0.8\textwidth]{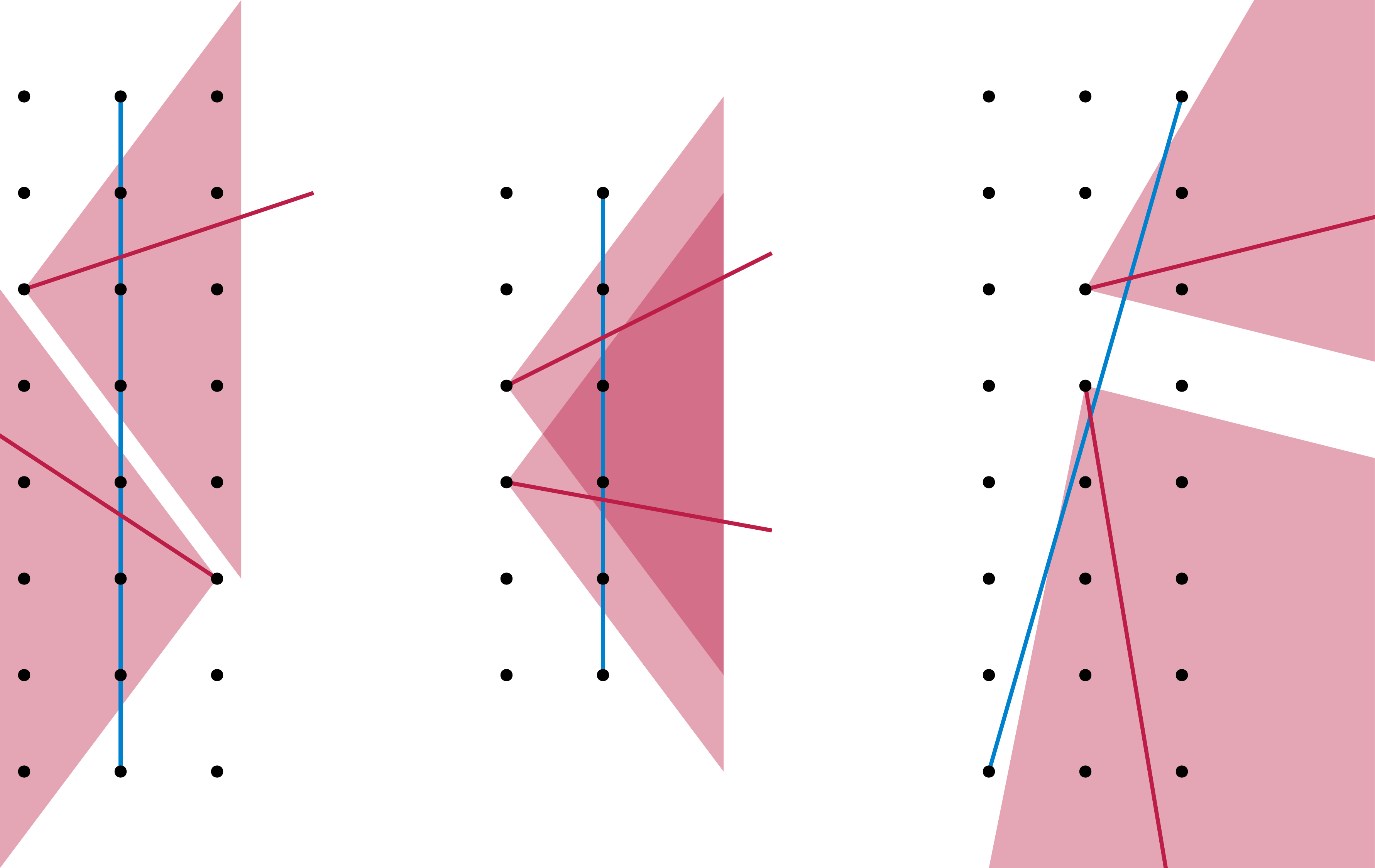}
\caption{Case analysis for \autoref{lem:8x8suffices}. For any two given slopes we can find a grid of size at most $8\times 8$, and two disjoint (red) rays with those slopes, having endpoints within the grid and crossed by a blue segment with both endpoints in the grid,
such that any sufficiently-small rotation of the red rays does not change this crossing pattern. Left: two slopes that are within $45^\circ$ of opposite directions on a coordinate axes;. Center: two slopes that are within $45^\circ$ of the same direction on a coordinate axis. Right: two slopes that are close to forming a right angle. The left and middle cases also occur rotated by $45^\circ$, but these rotations still fit within an $8\times 8$ grid.}
\label{fig:placement}
\end{figure}

\begin{proof}
We may assume without loss of generality that the given grid is axis-aligned. Some of the cases in our case analysis will instead use a grid that is rotated by $45^\circ$ from the coordinate axes; such a grid can be found as a subset of the given grid points, forming a subgrid with spacing larger by a factor of $\sqrt{2}$.

If $\Theta_1$ and $\Theta_2$ are within $45^\circ$ of opposite coordinate axes, or both within $45^\circ$ of opposite directions along a line of slope $\pm 1$, we may use either the placement of red rays and blue segment shown in \autoref{fig:placement} (left), or its rotation by a multiple of $45^\circ$ within a rotated subgrid. The grid placement shown uses a $3\times 8$ subgrid of the given $8\times 8$ grid,
while the rotated placement fits within an $8\times 8$ axis-aligned grid (the bounding box of the rotated version of the blue segment). The red wedges in the figure show regions within which the two rays may rotate while preserving the correctness of the construction; note that they span an angle greater than $45^\circ$ from the horizontal axis in the figure. We may take $\delta$ in this case to be half of the amount by which they are greater than $45^\circ$.

If $\Theta_1$ and $\Theta_2$ are both within $45^\circ$ of the same direction along a coordinate axis, or along a line of slope $\pm 1$, we may use the placement of red rays and blue segment shown in \autoref{fig:placement} (center) or its rotation by a multiple of $45^\circ$. The unrotated case fits within a $2\times 5$ grid, and the rotated case fits within a $5\times 5$ grid (the bounding box of the rotated version of the blue segment). Again, the wedges span an angle greater than $45^\circ$ and we may take $\delta$ to be half of the amount by which they are greater.

In the remaining case, if we partition the circle of directions of rays into eight arcs spanning angles of $45^\circ$ by splitting it at the directions of coordinate axes and lines of slope $\pm 1$, then $\Theta_1$ and $\Theta_2$ must belong to arcs that are at right angles to each other. In this case, we may use the placement of red rays and blue segment shown in \autoref{fig:placement} (right) or its rotation by a multiple of $90^\circ$. This case fits within a $3\times 8$ grid. Again, the red wedges cover a greater range of angles than the $45^\circ$ arcs that $\Theta_1$ and $\Theta_2$ belong to, and we may take $\delta$ to be half of the amount by which they are greater.
\end{proof}

By choosing segment endpoints in this way within each guide disk, we ensure that the red and blue segments cross each other in the pattern described earlier. This produces an alternating 6-cycle of red and blue segments around each vertex. The cycles for adjacent vertices cross each other twice.

The next lemma shows that the construction of $A_G$ from $G$ is a parsimonious reduction from independent sets to maximum non-crossing subsets, and moreover that the size of an independent set can be determined from the colors in the corresponding maximum non-crossing subset.

\begin{lemma}
\label{lem:rb-parsimony}
Let $G$ be a connected 3-regular planar graph, and let $A_G$ be constructed from $G$ as above.
Then the independent sets of $G$ correspond one-for-one with the maximum-size non-crossing subsets of $A_G$, with an independent set having size $k$ if and only if the corresponding non-crossing subset has exactly $6k$ red segments.
\end{lemma}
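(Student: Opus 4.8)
The plan is to exploit the structure of $A_G$ as a union of one vertex gadget per vertex of $G$ (the alternating cycles of \autoref{lem:rb-max}), glued together only along the edges of $G$. First I would record the two structural facts the construction already supplies: $A_G$ is a red--blue arrangement (each blue segment is crossed exactly by its two alternating-cycle neighbours, which are red; each red segment is crossed by its two alternating-cycle neighbours, which are blue, together with exactly one further red segment, in the order blue--red--blue; and $A_G$ is connected because $G$ is and because the gadgets of adjacent vertices cross), and each alternating cycle of $A_G$ is the twelve-segment cycle built for one vertex $v\in V(G)$, comprising six red and six blue segments. By \autoref{lem:rb-max}, every maximum-size non-crossing subset of $A_G$ is monochromatic on each alternating cycle and uses exactly half of its segments, so such a subset is completely described by a function $c\colon V(G)\to\{\mathrm{red},\mathrm{blue}\}$ recording which colour class of $v$'s cycle it selects; conversely every such $c$ describes a subset using exactly $r$ segments, which is of maximum size by \autoref{lem:rb-max} as soon as it is non-crossing.

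Next I would identify which functions $c$ give non-crossing subsets. The only crossings in $A_G$ are (i) the bichromatic crossings inside a single cycle, and (ii) for each edge $uv$ of $G$, the two red--red crossings between the gadget of $u$ and the gadget of $v$, one on each side of $uv$; the construction places no other crossings between distinct gadgets, and in particular no two gadgets of non-adjacent vertices cross. A subset selected by $c$ contains no crossing of type (i), since within each cycle it takes a single colour class. It contains a crossing of type (ii) for an edge $uv$ exactly when $c(u)=c(v)=\mathrm{red}$, because the two red segments that cross there lie in the red classes of the gadgets of $u$ and of $v$ respectively; if at most one of $c(u),c(v)$ equals $\mathrm{red}$, the two gadgets contribute no common crossing. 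Hence the subset selected by $c$ is non-crossing — equivalently, is a maximum-size non-crossing subset of $A_G$ — if and only if $I_c=\{v: c(v)=\mathrm{red}\}$ is an independent set of $G$.

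Finally I would assemble the bijection and the size relation. The map sending an independent set $I$ to the subset that selects the red class of the gadget of each $v\in I$ and the blue class of every other gadget is a well-defined injection from independent sets of $G$ into maximum-size non-crossing subsets of $A_G$, and the previous paragraph shows it is surjective. A gadget selected red contributes all six of its red segments and none of its blue segments, while a gadget selected blue contributes no red segments, so the image of $I$ has exactly $6|I|$ red segments; conversely a maximum-size non-crossing subset with $6k$ red segments selects the red class in exactly $k$ gadgets, whose vertices form an independent set of size $k$. This is precisely the claimed correspondence.

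I expect the only real work to be the verification in the second paragraph that the inter-gadget crossing structure is exactly as described — that the guide-disk placement together with \autoref{lem:8x8suffices} really does create the two red--red crossings per edge and nothing else between distinct gadgets — but this is already built into the construction and its figures, so in the write-up it is bookkeeping rather than a new argument; everything else follows formally from \autoref{lem:rb-max}.
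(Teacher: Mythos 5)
Your proposal is correct and follows essentially the same route as the paper: reduce to per-cycle monochromatic choices via \autoref{lem:rb-max}, observe that the only inter-gadget crossings are the red--red crossings along edges of $G$, and conclude that maximum-size non-crossing subsets correspond exactly to independent sets with $6k$ red segments for a set of size $k$. Your write-up is somewhat more explicit about the crossing bookkeeping than the paper's, but the argument is the same.
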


\begin{proof}
Given an independent set $Z$ in $G$, one can construct a non-crossing subset of $A_G$ of size $6|V(G)|$, with $6|Z|$ red segments, by choosing all the red segments in the alternating cycles around members of $Z$, and all the blue segments in the remaining alternating cycles. Because $Z$ includes no two adjacent vertices, the subsets chosen in this way will be non-crossing. $Z$ can be recovered as the set of vertices from which we used red segments of alternating cycles, so the correspondence from independent sets to non-crossing subsets constructed in this way is one-to-one.

By \autoref{lem:rb-max}, these non-crossing subsets have maximum size. By the same lemma, every maximum-size non-crossing subset consists of red segments in the alternating cycles of some vertices of $G$ and blue segments in the remaining alternating cycles. No two adjacent vertices of $G$ can have their red segments chosen as that would produce a crossing,
so there are no maximum-size non-crossing subsets other than the ones coming from independent sets of~$G$.
\end{proof}

This leads to the main result of this section:

\begin{lemma}
\label{lem:red-blue-hard}
It is \sharpp-complete to compute the number of maximum-size non-crossing subsets of a red--blue arrangement $A$.
\end{lemma}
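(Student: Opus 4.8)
The plan is to assemble the pieces already in place into a single reduction. For membership in \sharpp, I would exhibit a nondeterministic polynomial-time Turing machine that, given a red--blue arrangement $A$, first verifies that $A$ satisfies the definition of a red--blue arrangement and computes the number $r$ of red segments, then nondeterministically guesses a subset $S$ of the segments of $A$, and accepts if and only if $S$ has exactly $r$ segments and no two segments of $S$ cross. By \autoref{lem:rb-max}, the maximum-size non-crossing subsets of $A$ are precisely the non-crossing subsets of cardinality $r$, so the number of accepting computations is exactly the quantity to be counted; hence the problem lies in \sharpp.

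For \sharpp-hardness I would reduce from counting independent sets in connected 3-regular planar graphs. This problem is \sharpp-complete: Xia, Zhang, and Zhao~\cite{XiaZhaZha-TCS-07} proved it for counting vertex covers in connected 3-regular bipartite planar graphs, and complementation turns vertex covers into independent sets without altering the graph, so the instances remain connected, 3-regular, and planar. Given such a graph $G$, the reduction outputs the red--blue arrangement $A_G$ constructed as described above. The preceding lemmas guarantee that this construction runs in polynomial time and produces an arrangement whose vertices have integer coordinates of polynomial magnitude, using the radius bound $\rho=\Theta(1/|V(G)|^3)$, the expansion by $\Theta(|V(G)|^3)$ that forces each guide disk to contain an $8\times 8$ grid, and the placement supplied by \autoref{lem:8x8suffices}. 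By \autoref{lem:rb-parsimony}, the independent sets of $G$ correspond one-for-one with the maximum-size non-crossing subsets of $A_G$, so the two counts are equal. This is a parsimonious reduction, hence in particular a Turing reduction; composing it with the Turing reductions from every problem in \sharpp{} to counting independent sets yields \sharpp-hardness, and together with membership in \sharpp{} this gives \sharpp-completeness under Turing reductions.

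Since essentially all of the geometric and combinatorial difficulty has already been absorbed into the construction of $A_G$ and into \autoref{lem:rb-max} and \autoref{lem:rb-parsimony}, the only point requiring care in this proof is checking that the pieces genuinely fit the definition of a reduction: that the quantity counted by the \sharpp{} machine is exactly the number of maximum-size non-crossing subsets (which is why the identification in \autoref{lem:rb-max} of the maximum size with $r$ is needed, so that cardinality can be checked in polynomial time), and that the transformation $G\mapsto A_G$ is polynomial-time with polynomially-bounded output. I expect no further obstacle.
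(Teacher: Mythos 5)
Your proposal is correct and follows essentially the same route as the paper: membership in \sharpp{} via a nondeterministic machine that checks cardinality (using \autoref{lem:rb-max} to identify the maximum size with the number of red segments) and non-crossingness, and hardness via the parsimonious reduction of \autoref{lem:rb-parsimony} from counting independent sets in 3-regular planar graphs. No issues.
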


\begin{proof}
The problem is clearly in \sharpp, as the number of these subsets equals the number of accepting paths in a nondeterministic polynomial-time Turing machine that nondeterministically chooses a subset of segments of the arrangement, verifies that no two of the chosen segments cross each other, verifies that the number of chosen segments is half of the total number of segments, and accepts only if both of these things are true.
By \autoref{lem:rb-parsimony}, the construction of the red--blue arrangement $A_G$ from a 3-regular planar graph $G$ is a parsimonious reduction from the known \sharpp-complete problem of counting independent sets in 3-regular planar graphs.
\end{proof}

The arrangement constructed by this reduction has integer coefficients of magnitude $O(|V(G)|^4)$. Here, one factor of $|V(G)|$ comes from the $|V(G)|\times |V(G)|$ grid on which we drew the planar graph $G$, and the remaining factor of $O(|V(G)|^3)$ comes from the expansion needed to ensure that each guide disk contains an $8\times 8$ grid of integer points.

\section{Reduction to counting triangulations}

In this section we describe our reduction from line segment arrangements to polygons. The rough idea is to thicken each segment of the given red--blue arrangement to a rectangle with rounded ends, and form the union of these rectangles. In the part of the polygon formed from each segment, many more triangulations will use diagonals running end-to-end along the segment than triangulations that do not, causing most triangulations to correspond to sets of diagonals from a maximum-size non-crossing subset of the arrangement. With a careful choice of the shapes of the thickened segments in this construction, we can recover the number of maximum-size non-crossing subsets of the arrangement from the number of triangulations.

\subsection{From segments to polygons}

\begin{figure}[t]
\centering\includegraphics[width=\textwidth]{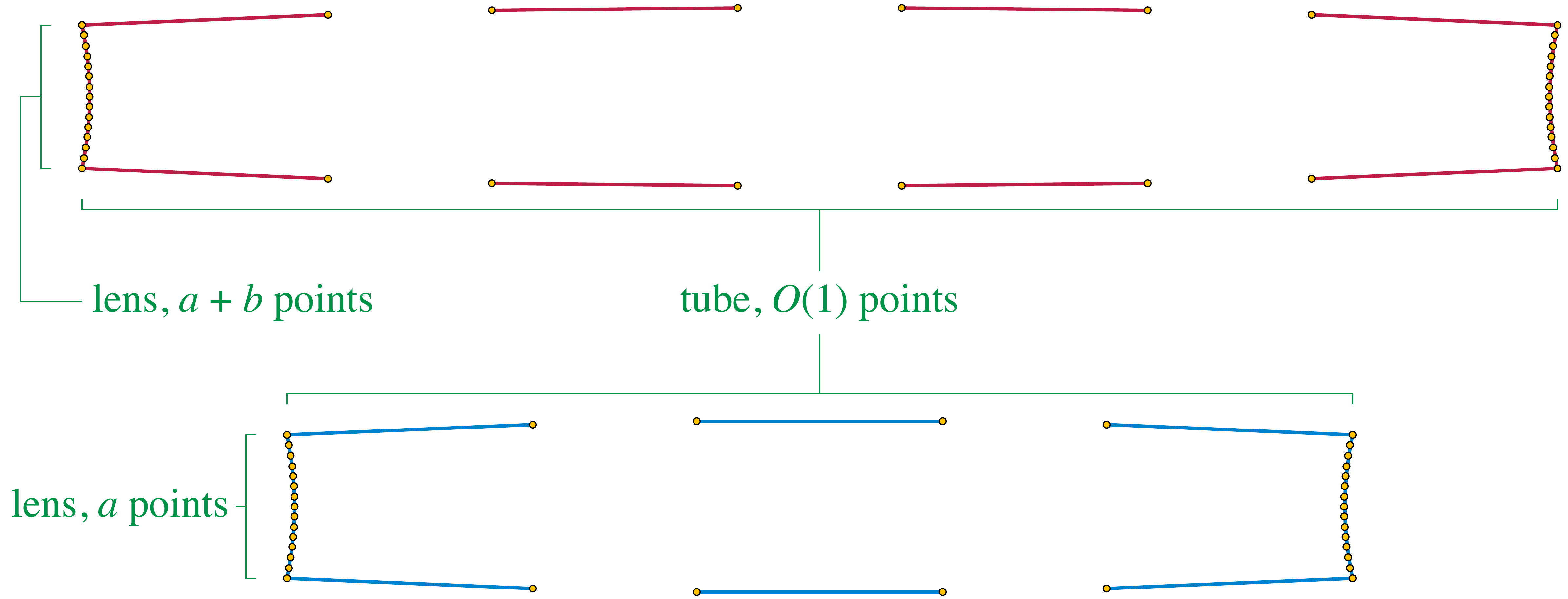}
\caption{Polygonal gadgets for replacing the red segments (top) and blue segments (bottom) of a red--blue arrangement}
\label{fig:telescope}
\end{figure}

The gadgets that we use to replace the red and blue segments of a red--blue arrangement are shown (not to scale) in \autoref{fig:telescope}. The top part of the figure shows the gadget used to replace a single red segment, and the bottom part of the figure shows the gadget used to replace a single blue segment. These gadgets resemble the cross-section of a telescope, and we have named their parts accordingly. Each gadget consists of $O(1)$ vertices spaced along two convex curves, which we call the tube. These two curves bend slightly outwards from each other, but both remain close to the center line of the gadget. Consecutive pairs of points are connected to each other along the tube, leaving either three gaps (for red segments) or two gaps (for blue segments) where other segments of the arrangement cross the given segment. The two vertices on each side of each gap are shared with the gadget for the crossing segment.

At the ends of the tubes, the two convex curves are connected to each other by two concave curves (the lenses of the gadget), each containing larger numbers of vertices that are connected consecutively to each other without gaps. Each blue lens has some number $a$ of vertices,
and each red lens has some larger number $a+b$ of vertices, where $a$ and $b$ are both positive integers to be determined later as a function of the number of segments in the arrangement.
The key geometric properties of these gadgets are:

\begin{description}
\item[P1.] Each gadget forms a collection of polygonal chains whose internal vertices (the vertices that are not the end-point of any of the chains) are not part of any other gadget and whose endpoints are part of exactly one other gadget. The union of all the gadgets of the arrangement forms a single polygon (with holes).
\item[P2.] Each vertex of the lens at one end of the gadget is visible within the polygon to each vertex of the lens at the other end of the gadget, and to each vertex of the tube of the gadget, but not to any vertices of its own lens that it is non-adjacent to.
\item[P3.] For each pair of points that are visible to each other there is a gadget containing both of them. (This is not an automatic property of the gadgets as drawn, but can be achieved by making the gadgets sufficiently narrow relative to the spacing of points along their tubes, depending on the angles at which different segments cross each other.)
\end{description}

\begin{figure}[t]
\centering\includegraphics[width=0.8\textwidth]{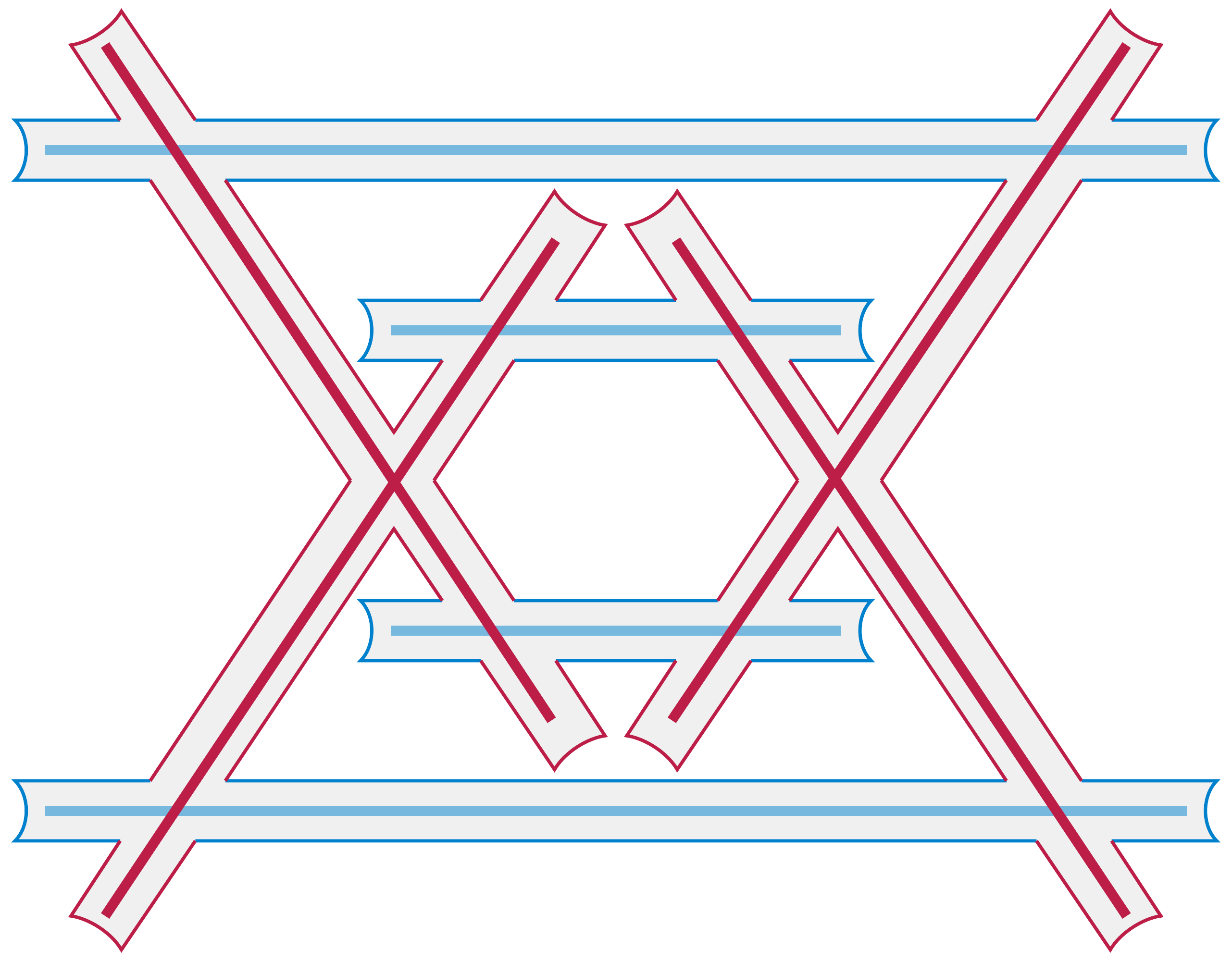}
\caption{A red--blue arrangement $A$ (thick colored segments) and the polygon $P_A$ (shaded region with thin outlines) obtained from it by replacing its segments by gadgets (schematic view, not to scale). $P_A$ has three holes (unshaded regions entirely surrounded by $P_A$): a non-convex region below the upper blue gadget, a central convex hexagon, and another non-convex region above the lower blue gadget.}
\label{fig:arrpoly}
\end{figure}

\begin{definition}
\label{def:polygons}
We denote by $P_A$ any polygon obtained by replacing each segment of a red--blue arrangement $A$ by a gadget, obeying properties P1, P2, and P3.
For any gadget $S$, we let $\Pi_S$ be the polygon formed by intersecting $P_A$ with the convex hull of $S$. $\Pi_S$ connects the endpoints of the polygonal chains of $S$ into a single polygon in the obvious way.
\end{definition}

\autoref{fig:arrpoly} gives an example of a red--blue arrangement $A$ and its corresponding polygon $P_A$, drawn approximately and schematically (not to scale).
To analyze the magnitudes of the coordinates needed to carry out this construction,
let $\mu$ denote the maximum magnitude of the integer coordinates in the given arrangement $A$,
let $\nu$ denote the sharpest angle at which two segments of $A$ cross each other,
and let $\xi$ denote the minimum distance between two crossing points or endpoints of the same segment of $A$. For instance, for the arrangement generated by the reduction from planar graphs to red-blue arrangements of \autoref{lem:red-blue-hard}, we have $\mu=O(|V(G)|^4)$, as discussed at the end of \autoref{sec:count-independent}.
We also have $\nu=\Omega(1/|V(G)|^4)$: the crossing angles between red and blue segments are described by the case analysis of \autoref{lem:8x8suffices}, and are all $\Omega(1)$, so the sharpest possible crossings are those between two red segments. Each red segment has one endpoint within a guide disk of radius $O(1)$ centered on a point in an edge of $G$, and forms an angle with that edge large enough to be disjoint from the other guide disk; since the length of the edge is $O(|V(G)|^4)$, the angle between the red segment and the edge must be $\Omega(1/|V(G)|^4)$. Any two crossing red segments have angles of opposite signs to the edge they both correspond to, of at least this magnitude, so their crossing angle is $\Omega(1/|V(G)|^4)$. And we have $\xi=\Omega(1)$ because all nearby crossing pairs occur within one of the cases of \autoref{lem:8x8suffices}, which all have constant separation for their two crossings.

We will scale $A$ by a large integer factor so that the integer grid forms a fine subgrid overlain on the expanded copy of $A$, allowing us to construct $P_A$ using vertices with integer coordinates. We now examine more carefully how big this scaling factor should be, as a function of $\mu$, $\nu$, and $\xi$, in order to construct $P_A$.

In order to achieve property P3, we set the width of each gadget in the scaled grid to be $O(\xi\nu)$. To be able to maintain this width, over a gadget whose length could be as much as $\mu$, we need the slopes of the tube edges to be nearly parallel, but varying by each other by angles of magnitude $\Theta(\xi\nu/\mu)$. And to be able to construct segments of these angles using vertices of integer coordinates, to use as the tube edges of a gadget, it suffices to choose the scaling factor of the grid to be $O(\xi^2\nu/\mu)$

We must also be able to construct the lenses of each gadget. It is possible to form a convex chain of $a+b$ vertices within a square grid whose side length is $\Theta((a+b)^{3/2})$~\cite{Jar-MZ-26,Epp-18}. Because our gadgets are very long and narrow, the requirement that the two opposite lenses be completely visible to each other does not significantly affect this bound. In order to fit a square grid large enough to contain such a convex chain within the width of a single gadget, it suffices to choose the scaling factor of the grid to be $O((a+b)^{-3/2}\xi\nu)$.

Putting these two bounds for the scaling factor needed for the tubes and lenses of our gadgets,
we can perform our construction with a scaling factor of
\[
\Theta\left(\max\left(\frac{\mu}{\xi^2\nu},\frac{(a+b)^{3/2}}{\xi\nu}\right)\right).
\]
Plugging in the values $\mu=O(|V(G)|^4)$, $\nu=\Omega(1/|V(G)|^4)$, and $\xi=\Omega(1)$ coming from \autoref{lem:red-blue-hard} gives us a scale factor of
\[
\Theta\left(\max\left(|V(G)|^8,(a+b)^{3/2}|V(G)|^4\right)\right).
\]
When $a$ and $b$ are bounded by a polynomial of $|V(G)|$ (as they will be),
this allows our construction to be performed with integer coordinates of polynomial magnitude.

\subsection{Triangulation within a gadget}

By looking at the way a triangulation of $P_A$ behaves within each gadget, we can recover a non-crossing subset of segments of $A$ from the triangulation, as the following definitions and lemmas show.

\begin{figure}[t]
\centering\includegraphics[width=\textwidth]{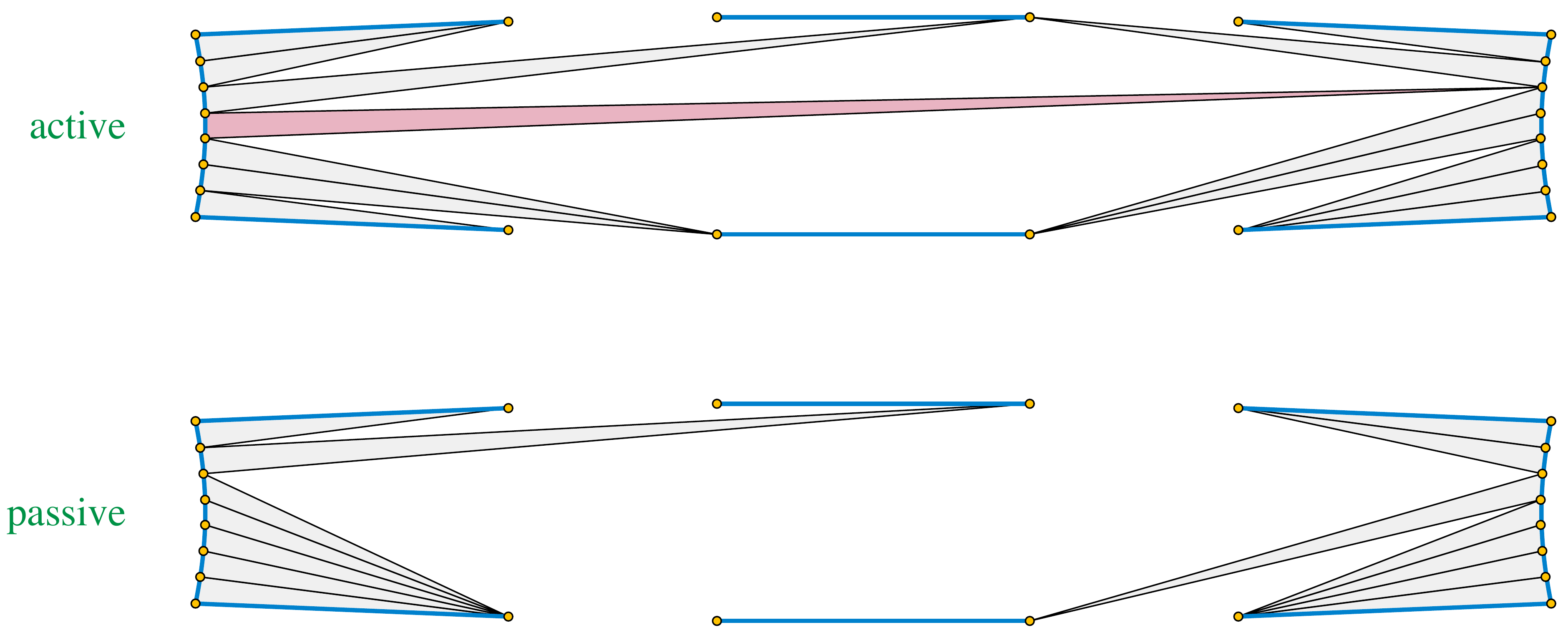}
\caption{The local part of a triangulation in the gadget of a segment that is active (top) or passive (bottom). A triangle of the top local part that stretches from lens to lens, making the segment active, is highlighted by red shading.}
\label{fig:active}
\end{figure}

\begin{definition}
For a triangulation $T$ of a polygon $P_A$, and a gadget $S$ of $P_A$ corresponding to a segment $s$ of $A$, we define the \emph{local part} of $T$ in $S$ to be the set of triangles that have a side on one of the two lenses of $S$.  We say that segment $s$ of $A$ is \emph{active} in $T$ if its local part includes at least one triangle that touches both lenses of $S$, and that $s$ is \emph{passive} otherwise.
\end{definition}

See \autoref{fig:active} for an example, and note that (as in the figure) even when a segment is passive, its triangulation may include triangles that block visibility across the gadget.

\begin{lemma}
\label{lem:active-non-crossing}
The active segments of $A$ in any triangulation of $P_A$ form a non-crossing set.
\end{lemma}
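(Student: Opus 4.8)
The plan is to argue that if two segments $s$ and $s'$ of $A$ cross, then they cannot both be active in a triangulation $T$ of $P_A$. Let $x$ be the crossing point of $s$ and $s'$. By the construction of the red--blue arrangement, $x$ lies in one of the gaps of the gadget $S$ for $s$ and also in one of the gaps of the gadget $S'$ for $s'$; moreover the two vertices on either side of the gap in $S$ are shared with $S'$ (they are the two vertices bounding the corresponding gap of $S'$), so near $x$ the four polygonal chains of $S$ and $S'$ meet in a small $4$-vertex configuration surrounding the crossing.

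Now suppose for contradiction that both $s$ and $s'$ are active. Then the local part of $T$ in $S$ contains a triangle $\Delta$ with a side on each lens of $S$, and the local part in $S'$ contains a triangle $\Delta'$ with a side on each lens of $S'$. Each of $\Delta$ and $\Delta'$ lies inside $P_A$. I claim $\Delta$ must separate the two lenses of $S'$ from each other within $\Pi_{S'}$, and symmetrically for $\Delta'$ and $S$. The point is that $\Delta$ spans $\Pi_S$ from one end to the other, so it crosses the thin ``neck'' of $\Pi_S$ at the location of the gap containing $x$; since $\Pi_{S'}$ passes through that same neck region transversally (the two gadgets genuinely cross there), the triangle $\Delta$ forms a barrier inside $P_A$ that any lens-to-lens triangle of $S'$ would have to cross. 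But $\Delta$ and $\Delta'$ are both triangles of the same triangulation $T$, hence have disjoint interiors --- a contradiction. More carefully, one shows that the convex hulls of $S$ and of $S'$ overlap only in a small quadrilateral-ish region near $x$ (this is where property P3 and the narrowness of the gadgets are used: no segment, and in particular no edge of a triangle lying in $\Pi_S$, reaches a point of $\Pi_{S'}$ except near the shared crossing), and that a lens-to-lens triangle of $S$ together with a lens-to-lens triangle of $S'$ would force two edges of $T$ to properly cross inside that region.

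I would organize the write-up as follows. First, fix crossing segments $s,s'$ and set up the local picture near $x$, recording which four vertices are shared and how the gaps of $S$ and $S'$ coincide. Second, observe that a lens-to-lens triangle $\Delta$ of $S$ has all three vertices among the vertices of $S$, hence lies in the convex hull of $S$, and that it must contain in its interior or on its boundary a point arbitrarily close to $x$ --- indeed, because $\Delta$ reaches both lenses and $\Pi_S$ is pinched to width $O(\xi\nu)$ at the gap, $\Delta$ crosses the segment of $s$ near $x$. Third, do the same for $\Delta'$, and then derive the contradiction: the edges of $\Delta$ crossing near $x$ and the edges of $\Delta'$ crossing near $x$ cannot be made non-crossing, because $s$ and $s'$ cross there at a nonzero angle $\geq\nu$, so locally the two gadgets' convex hulls interleave and any ``horizontal'' chord of one must cross any ``horizontal'' chord of the other. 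Finally, conclude that at most one of $s,s'$ is active, and since this holds for every crossing pair, the active segments form a non-crossing set.

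The main obstacle is making precise the claim that a lens-to-lens triangle of $S$ necessarily ``crosses'' the other gadget $S'$ at the shared gap, rather than sneaking around it --- i.e., ruling out that $\Delta$ detours through a region of $P_A$ outside $\Pi_S$. This is exactly what properties P2 and P3 are designed to prevent: P2 guarantees a lens-to-lens triangle can actually be formed using mutually visible vertices, and P3 guarantees that any two mutually visible points lie in a common gadget, so that no edge of $\Delta$ can leave $\Pi_S$; combined with the pinch of $\Pi_S$ at each gap, this pins $\Delta$ to pass through the crossing region. Getting the quantitative geometry right (how narrow the gadgets must be relative to $\nu$ and $\xi$ for the interleaving argument to go through) is the routine-but-delicate part, and I would state it as a consequence of P3 rather than re-deriving the width bounds.
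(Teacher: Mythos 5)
Your proposal is correct and is essentially the paper's argument: the paper's proof is just the two-sentence observation that if two segments cross, any lens-to-lens triangle in one gadget must cross any lens-to-lens triangle in the other, which is impossible within a single triangulation. You have simply filled in the geometric justification (narrow gadgets pinned to their convex hulls via P3, transversal crossing at angle at least $\nu$) that the paper leaves implicit.
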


\begin{proof}
If two segments cross, any lens-to-lens triangle within one segment crosses any lens-to-lens triangle of the other. But in a triangulation, no two triangles can cross each other.
\end{proof}

\begin{lemma}
\label{lem:active-triangulation}
Let $T$ be a triangulation of $P_A$, let $s$ be an active segment of  $A$ in $T$,  let $S$ be the corresponding gadget, bounded by polygon $\Pi_S$. Then the intersection of $T$ and $\Pi_S$ is a triangulation of~$\Pi_S$.
\end{lemma}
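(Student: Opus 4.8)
The plan is to reduce the statement to the claim that \emph{no edge of $T$ crosses $\partial\Pi_S$}. The sides of $\Pi_S$ are of two kinds: the tube and lens edges of $S$, which are edges of $P_A$, hence of $T$, and so are never crossed; and the ``gap chords'' that close the gaps of $S$'s tube --- one chord on each side of each gap, namely the edges of $\operatorname{conv}(S)$ that bridge the crossing regions --- whose relative interiors lie in $\operatorname{int}(P_A)$ and therefore carry no vertex of $T$. So if no edge of $T$ crosses a gap chord, then each gap chord, joining two vertices of $T$ and meeting no edge and no vertex of $T$ in its relative interior, is itself an edge of $T$; then $\partial\Pi_S$ is a cycle of edges of $T$, the faces of $T$ lying inside $\Pi_S$ tile $\Pi_S$, and their vertices --- vertices of $P_A$ lying in $\Pi_S$ --- are exactly the vertices of $\Pi_S$, so $T\cap\Pi_S$ is a triangulation of $\Pi_S$.

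Next I would use the hypothesis that $s$ is active to produce a ``curtain'' inside $\Pi_S$. By property P1 a vertex $v$ on a lens of $S$ is an internal vertex of a chain of $S$, hence belongs to no other gadget, so by property P3 every point of $P_A$ visible from $v$ lies in $S$, i.e.\ in $\Pi_S$; therefore every triangle of $T$ with a vertex on a lens of $S$ has all three of its vertices in $\Pi_S\subseteq\operatorname{conv}(S)$ and, being a face of $T$, lies in $\Pi_S$. Since $s$ is active, $T$ contains a triangle $\Delta$ touching both lenses of $S$; by the preceding remark $\Delta\subseteq\Pi_S$, and, having two vertices on one lens and one on the other, $\Delta$ reaches from one end of $S$ to the other.

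Now suppose, for contradiction, that an edge $e$ of $T$ crosses a gap chord $\sigma$ of $\Pi_S$, at the gap where the tube of $S$ meets another gadget $S'$, so that the local picture is a crossing of the bands of $S$ and $S'$. First I would record the shape of $P_A$ there: $Q:=\operatorname{conv}(S)\cap\operatorname{conv}(S')$ is a convex quadrilateral whose four corners are the vertices shared by $S$ and $S'$, two of its sides being the gap chords of $\Pi_S$ at this gap (so $\sigma$ is one of them) and the other two the gap chords of $\Pi_{S'}$, while $\operatorname{conv}(S)$ near this gap is bounded precisely by the first pair; moreover, by the narrowness of the gadgets together with the fact that distinct crossings along one segment are well separated, the only gadgets whose regions meet $Q$ are $S$ and $S'$. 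The endpoints of $e$ are then forced: they lie in a common gadget by P3, whose region meets $Q$ since the point where $e$ crosses $\sigma$ lies in $Q$; this gadget is not $S$, because the endpoint of $e$ on the far side of $\sigma$ lies outside $\operatorname{conv}(S)$; hence $e\subseteq\Pi_{S'}=\operatorname{conv}(S')\cap P_A$, its far endpoint $z$ is a vertex of $S'$ outside $\operatorname{conv}(S)$, and --- since the only vertices of $P_A$ in $Q$ are its corners --- the near endpoint of $e$ is one of the two corners of $Q$ on the near side of $\sigma$. Finally, since the $S$--$S'$ crossing lies between the two lenses of $S$, the curtain $\Delta\subseteq\Pi_S$ enters $Q$ from the part of $S$'s band on one side of the crossing and leaves on the other; as $\Delta$ cannot cross $\partial\Pi_S$, the convex set $\Delta\cap Q$ meets $\partial Q$ only along the two sides of $Q$ that are gap chords of $\Pi_{S'}$, and these lie on $\partial\Pi_{S'}$, so $\Delta\cap Q$ separates $\Pi_{S'}$ into two pieces, with $z$ in one and both near-side corners of $\sigma$ in the other. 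The edge $e\subseteq\Pi_{S'}$, joining $z$ to such a corner, therefore crosses $\Delta$, which a planar subdivision forbids. This contradiction shows that no edge of $T$ crosses a gap chord, and with it proves the lemma.

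I expect the main obstacle to be the ``local picture'' step: establishing the combinatorics and geometry of $P_A$ near a crossing of two gadgets --- the quadrilateral $Q$, the identification of its sides with the gap chords of $\Pi_S$ and $\Pi_{S'}$, the description of $\partial\operatorname{conv}(S)$ there, and the exclusion of all other gadgets from $Q$ --- together with the correct use of P3 (and of the minimum-separation parameter of the arrangement) to locate both endpoints of a hypothetical boundary-crossing edge. Once those facts are in hand, the curtain $\Delta$ and the separation argument are short.
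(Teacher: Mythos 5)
Your proof is correct and takes essentially the same approach as the paper: property P3 confines edges of $T$ to single gadgets, and the lens-to-lens triangle witnessing that $s$ is active acts as a blocker (your ``curtain'') preventing any edge of a crossing gadget from entering $\Pi_S$, which is exactly the paper's appeal to ``the same argument as in Lemma~\ref{lem:active-non-crossing}.'' Your write-up is a considerably more detailed elaboration of the paper's four-sentence proof, and the geometric facts you flag as the main obstacle are indeed left implicit in the paper's construction of the gadgets.
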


\begin{proof}
By the construction of $P_A$, every edge of $T$ must belong to a single gadget (as there are no other visibilities between pairs of vertices in $P_A$). By the same argument as in \autoref{lem:active-non-crossing}, no segment with its endpoints in a single other gadget than $S$ can cross $\Pi_S$. Therefore, all edges of $T$ that include an interior point of $\Pi_S$ must connect two vertices of $S$, and lie within $\Pi_S$. It follows that all triangles of $T$ that include an interior point of $\Pi_S$ must also lie within $\Pi_S$. These triangles cover $\Pi_S$ and lie entirely within $\Pi_S$, so they must form a triangulation of~$\Pi_S$.
\end{proof}

\begin{corollary}
\label{cor:passive-triangulation}
Let $T$ be a triangulation of $P_A$, and let $L$ be a lens in polygonal chain $C_L$ of a gadget $S$. Suppose that some active segment of $A$ in $T$ separates $C_L$ from the rest of $S$. Let $Q_L$ be the polygon formed from $C_L$ by adding one more edge connecting the two endpoints of $C_L$. Then $T$ intersects $Q_L$ in a triangulation of~$Q_L$.
\end{corollary}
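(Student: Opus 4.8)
The plan is to reduce this to \autoref{lem:active-triangulation} by the same localization argument, adapted to the sub-chain $C_L$. Let $s'$ be the active segment whose gadget $S'$ separates $C_L$ from the rest of $S$, and let $\Pi_{S'}$ be the bounding polygon of $S'$. First I would observe that $C_L$, together with the portion of $\Pi_{S'}$ that caps it off, bounds exactly the region $Q_L$: the hypothesis that $s'$ separates $C_L$ from the remainder of $S$ means that inside the polygon $P_A$, the only way out of the region enclosed by $C_L$ is blocked by the gadget $S'$, so $Q_L$ (obtained by closing $C_L$ with the chord between its two endpoints) is a genuine sub-polygon of $P_A$ whose interior is disjoint from every other gadget. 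The two endpoints of $C_L$ are exactly the two vertices of the gap of $S$ where $s'$ crosses $s$, and these are shared with $S'$ by property P1.

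Next I would argue, exactly as in the proof of \autoref{lem:active-triangulation}, that any edge of $T$ meeting the interior of $Q_L$ must have both endpoints among the vertices of $Q_L$. By property P3 every edge of $T$ lies within a single gadget, and the only gadgets with vertices on or inside $Q_L$ are $S$ (contributing $C_L$) and $S'$ (contributing the two shared endpoints). An edge of $T$ within $S$ that meets the interior of $Q_L$ and is not contained in $Q_L$ would have to cross the chord closing $C_L$; but since $s$ is separated from its own lens sub-chain $C_L$ by the active segment $s'$, \autoref{lem:active-triangulation} applied to $s'$ already forces the triangulation inside $\Pi_{S'}$ to be self-contained, so no edge of $T$ crosses $\Pi_{S'}$ and in particular none crosses that chord. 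An edge within $S'$ meeting the interior of $Q_L$ is likewise impossible, again because the triangulation of $\Pi_{S'}$ is self-contained and its triangles do not poke through the chord bounding $Q_L$. Hence every edge and every triangle of $T$ meeting the interior of $Q_L$ lies inside $Q_L$, these triangles cover $Q_L$, and therefore they form a triangulation of $Q_L$.

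The main obstacle I anticipate is making precise the claim that ``$s'$ separates $C_L$ from the rest of $S$'' implies $Q_L$ is a well-defined sub-polygon bounded below by an edge of $\Pi_{S'}$ rather than by some more complicated arc, and that no triangle of the triangulation inside $\Pi_{S'}$ can protrude into $Q_L$. This rests on the geometry of how the gadget $S'$ passes through the gap of $S$: the two vertices of the gap are consecutive along $s$, the gap is narrow by property P3, and any lens-to-lens triangle of $S'$ (which exists since $s'$ is active) together with \autoref{lem:active-triangulation} pins down the whole local picture near the gap. Once that geometric bookkeeping is in place, the rest is a verbatim repetition of the localization argument already used twice above.
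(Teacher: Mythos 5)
Your argument is correct and rests on the same key step as the paper's proof: applying \autoref{lem:active-triangulation} to the active gadget $S'$ forces the chord joining the two endpoints of $C_L$ (a boundary edge of $\Pi_{S'}$) to appear in $T$, and that chord separates $Q_L$ from the rest of $P_A$, so the triangles of $T$ inside $Q_L$ cover and hence triangulate it. The paper states this in three sentences; your additional edge-by-edge analysis via property P3 and the worry you flag at the end are not needed once the chord is known to be an edge of $T$.
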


\begin{proof}
By \autoref{lem:active-triangulation}, $T$ contains the edge connecting the endpoints of $C$. This edge separates $Q_L$ from the rest of the polygon, so each triangle of $T$ must lie either entirely within or entirely outside $Q_L$. But the triangles within $Q_L$ cover $Q_L$ (as the triangles of $T$ cover all of $P_A$) so they form a triangulation of~$Q_L$.
\end{proof}

These claims already allow us to produce a precise count of the triangulations whose active segments form a maximum-size non-crossing subset, as a function of the number of red segments in the subset, which we will do in the next section. However, we also need to bound the number of triangulations with smaller sets of active segments, and show that (with an appropriate choice of the parameters $a$ and $b$) they form a negligible number of triangulations compared to the ones coming from maximum-size non-crossing subsets.

\subsection{Counting within a gadget}

It is straightforward to count triangulations of the polygon formed by a single lens:

\begin{figure}[t]
\centering\includegraphics[width=0.8\textwidth]{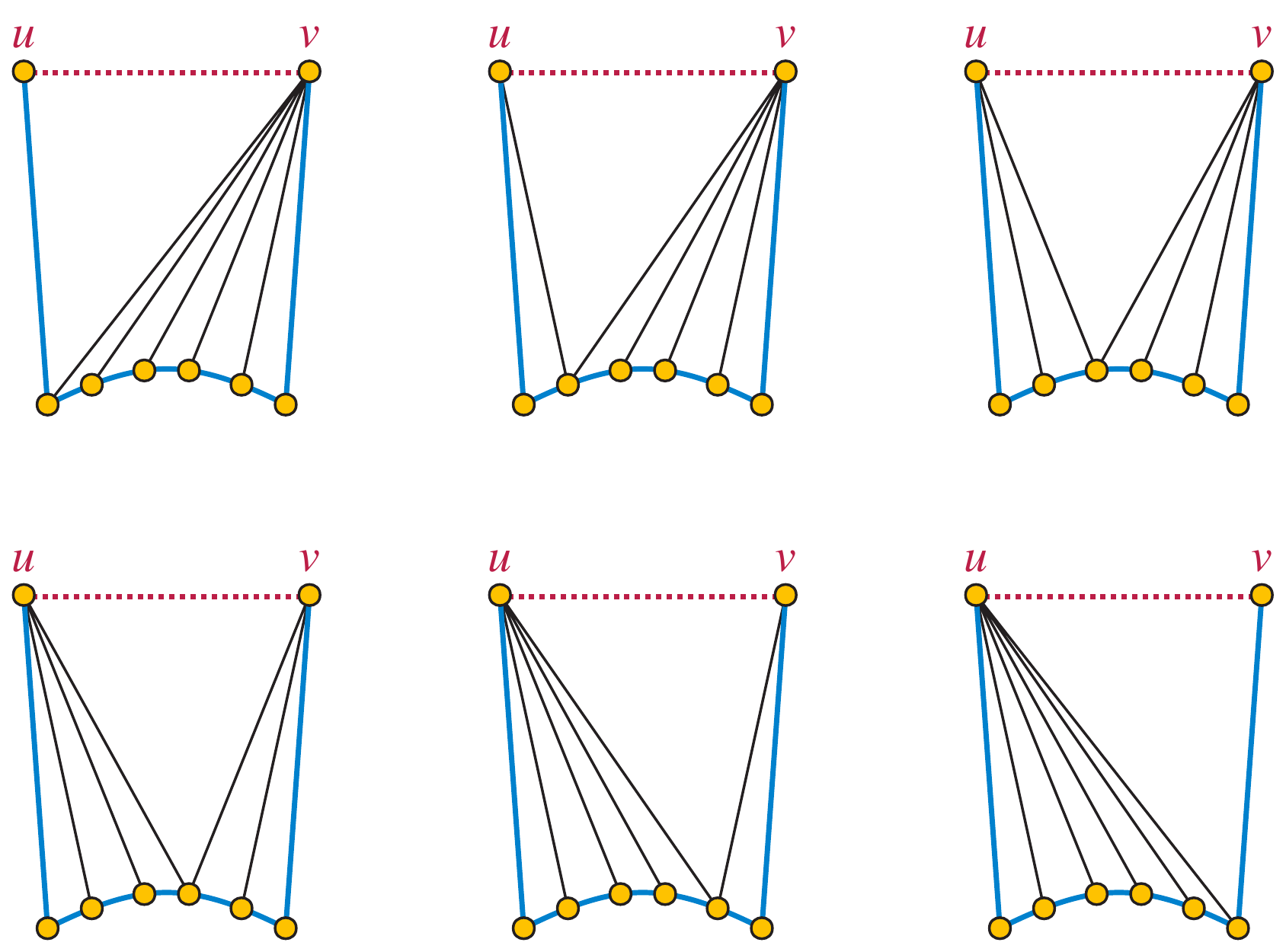}
\caption{Illustration for \autoref{lem:count-lenstri}: Closing off a lens $L$ by connecting the endpoints of its path (red dashed edge $uv$) forms a polygon $Q_L$ with as many distinct triangulations as lens vertices.}
\label{fig:lenstri}
\end{figure}

\begin{lemma}
\label{lem:count-lenstri}
Let $Q_L$ be the polygon formed from a lens by closing its path off by a single edge~$uv$, as in \autoref{cor:passive-triangulation}. Then the number of distinct triangulations of $Q_L$ equals the number of vertices of the lens ($a$ for a blue lens, $a+b$ for a red lens).
\end{lemma}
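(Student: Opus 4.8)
My plan is to reduce the count to a single one-dimensional choice, using the visibility structure that the gadget construction forces on $Q_L$. Write the lens path as $u=q_0,q_1,\dots,q_{a+1}=v$, where $q_1,\dots,q_a$ are the $a$ lens vertices proper (for a red lens, replace $a$ by $a+b$ throughout) and $q_0,q_{a+1}$ are the two vertices the lens shares with the tube. Then $Q_L$ is the $(a+2)$-gon with boundary cycle $q_0q_1\cdots q_{a+1}q_0$, in which $q_{a+1}q_0$ is the added edge $uv$. I would prove the lemma by first establishing the following visibility statement and then counting: inside $Q_L$, each of $u$ and $v$ sees every $q_i$, whereas each $q_i$ with $1\le i\le a$ sees (apart from $u$ and $v$) only its two path-neighbours $q_{i-1}$ and $q_{i+1}$.

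The second half of this statement is the easy half: it follows from the concave shape of the lens together with property~P2, since a segment joining two non-consecutive lens vertices exits $Q_L$ on the far side of the lens. The first half is where the work lies, and I expect it to be the main obstacle. Property~P2 deliberately makes $u$ \emph{invisible} to the far lens vertices inside $P_A$, so the point to argue is that the added edge $uv$ encloses, on top of the lens side of $P_A$, precisely the thin sliver behind the lens through which the segments $uq_i$ and $vq_i$ pass; these segments therefore lie inside $Q_L$ even though they leave $P_A$. This is exactly the place where the narrowness of the gadget relative to the spacing of its lens vertices is needed, and making it rigorous from the (informally specified) lens shape is the step I would expect to take the most care.

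Granting the visibility statement, the count is immediate. Every chord of $Q_L$ that lies inside $Q_L$ is incident to $u$ or to $v$, so every triangle of any triangulation of $Q_L$ has $u$ or $v$ among its vertices; in particular the unique triangle on the boundary edge $uv$ is $u\,q_k\,v$ for some $k$ with $1\le k\le a$. That triangle cuts $Q_L$ into the sub-polygon on $q_0,\dots,q_k$ closed off by $uq_k$ and the sub-polygon on $q_k,\dots,q_{a+1}$ closed off by $q_kv$. Inside the first sub-polygon $q_k$ now has only the visible neighbour $q_{k-1}$, so the triangle on $uq_k$ is forced to be $u\,q_{k-1}\,q_k$; iterating, that sub-polygon has the single triangulation consisting of the fan from $u$, and symmetrically the other sub-polygon has the single fan triangulation from $v$. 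Conversely each $k\in\{1,\dots,a\}$ is realizable, since $uq_k$ and $q_kv$ are interior to $Q_L$. Thus the triangulations of $Q_L$ correspond bijectively to the choices $k\in\{1,\dots,a\}$, so there are exactly $a$ of them, equal to the number of lens vertices (and $a+b$ in the red case).
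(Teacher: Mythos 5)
Your proof is correct and follows essentially the same route as the paper's: the triangulation of $Q_L$ is determined by the choice of apex $q_k$ for the triangle on $uv$, the two remaining pieces are forced fans from $u$ and $v$ because each lens vertex sees only its neighbours besides $u$ and $v$, and the count is therefore the number of lens vertices. One small correction to the step you flag as the main obstacle: Property P2 \emph{guarantees} (rather than denies) that each lens vertex is visible within $P_A$ to every tube vertex of its gadget, including $u$ and $v$, so the segments $uq_i$ and $vq_i$ already lie in $P_A$, and they lie in $Q_L$ simply because $Q_L$ is the portion of $P_A$ on the lens side of the triangulation edge $uv$.
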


\begin{proof}
Any triangulation of $Q_L$ is completely determined by the choice of apex of the triangle that has $uv$ as one of its sides. All the remaining edges of the triangulation must each connect one vertex of the lens to the remaining visible endpoint of $uv$, the only vertex visible to that lens vertex (\autoref{fig:lenstri}). The number of choices for the apex equals the number of lens vertices.
\end{proof}

More generally, for any passive segment, we have:

\begin{lemma}
\label{lem:passive-is-poly}
Let $S$ be the gadget of segment $s$. Then the number of combinatorially distinct ways to choose the local part in $S$ of a triangulation of $P_A$ for which $s$ is passive is polynomial in the number of lens vertices of $S$.
\end{lemma}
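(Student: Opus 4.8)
The plan is to show that, when $s$ is passive, the local part of a triangulation $T$ in the gadget $S$ splits into one piece per lens, and that each such piece is described by a short ``step function'' into a set of only $O(1)$ possible apex vertices.

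Fix one lens $L$ of $S$ with vertices $v_0,v_1,\dots,v_m$ listed along its concave chain, where $m+1$ is the number of vertices of $L$. First I would analyse the triangles of the local part that lie along $L$. Since consecutive lens vertices are joined by edges of $P_A$, each of the $m$ lens edges $v_iv_{i+1}$ is a side of exactly one triangle $\Delta_i$ of $T$, and each such $\Delta_i$ belongs to the local part of $S$; conversely, no triangle can have two lens edges of $L$ as sides, since that would require a diagonal between two vertices of $L$ at distance two along the chain, which property P2 forbids. Hence the local part of $T$ along $L$ is exactly $\{\Delta_0,\dots,\Delta_{m-1}\}$ with $\Delta_i=v_iv_{i+1}\alpha_i$ for some apex $\alpha_i$. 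By P2 the apex $\alpha_i$ cannot be an interior vertex of $L$ (it would have to see two consecutive vertices of its own lens), and since $s$ is passive it cannot be a vertex of the opposite lens either, because a triangle with a side on $L$ and a vertex on the other lens would make $s$ active. Thus $\alpha_i$ lies in the tube of $S$, a set $B_L$ of only $O(1)$ vertices, and the piece of the local part along $L$ is completely determined by the word $(\alpha_0,\dots,\alpha_{m-1})\in B_L^{\,m}$.

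The key structural step is to show that this word changes value only $O(1)$ times: every tube vertex that occurs among the $\alpha_i$ occurs in a single contiguous run. I would prove this by a pinching argument. Suppose $\alpha_i=\alpha_k=c$ with $i<k$. Then the $T$-edges $v_{i+1}c$ and $v_kc$ (sides of $\Delta_i$ and $\Delta_k$), together with the lens subchain $v_{i+1},\dots,v_k$, bound a sub-polygon $W$ of $P_A$ whose only vertices are $c$ and the chain vertices $v_{i+1},\dots,v_k$; inside $W$, every interior lens vertex therefore sees nothing but its two neighbours and $c$, so triangulating $W$ forces $\alpha_{i+1}=\dots=\alpha_{k-1}=c$ as well. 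Consequently $(\alpha_0,\dots,\alpha_{m-1})$ is a concatenation of at most $|B_L|$ constant runs, so the number of possibilities for it is at most
\[
|B_L|^{|B_L|}\binom{m-1}{|B_L|-1}=O\!\left(m^{\,|B_L|-1}\right),
\]
which is polynomial in $m$, hence in the number of vertices of $L$, since $|B_L|=O(1)$.

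Finally I would assemble the two lenses: the local part of $T$ in $S$ is the disjoint union of its pieces along the two lenses of $S$ (no triangle has a side on both, since the lenses are disjoint chains), so the number of combinatorially distinct local parts for which $s$ is passive is at most the product of the two per-lens bounds, again polynomial in the number of lens vertices of $S$ (equal to $2a$ for a blue gadget and $2(a+b)$ for a red one). The main obstacle is the pinching argument: one must check that the sub-polygon $W$ genuinely has no other vertex of $P_A$ in its interior or on its boundary, which I would justify using the narrowness of the gadget guaranteed by property P3 together with the fact, already exploited in \autoref{lem:active-triangulation}, that $P_A$ has no visibilities between vertices of different gadgets. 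Everything else is counting.
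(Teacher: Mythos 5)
Your proposal is correct and follows essentially the same route as the paper's proof: each lens edge supports one local triangle whose apex must be one of the $O(1)$ non-lens vertices of the gadget (it cannot be on its own lens by P2, nor on the opposite lens since $s$ is passive), the edges sharing an apex form contiguous runs, and there are only polynomially many ways to split the lens-edge sequence into $O(1)$ runs. Your pinching argument merely fills in the contiguity claim that the paper asserts without detail.
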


\begin{proof}
Recall that the triangles of the local part each have one lens edge as one of their sides. Within each lens, the only choice is where to place the apex of each of these triangles. There are $O(1)$ choices of apex vertex, and within each lens the edges whose triangles have a given apex must form a contiguous subsequence. There are only polynomially many ways of partitioning the sequences of lens edges into a constant number of contiguous subsequences.
\end{proof}

\begin{lemma}
\label{lem:nonlocal}
Let $T$ be a triangulation of $P_A$ and let $S$ be a gadget of $P_A$. Then only $O(1)$ vertices of $S$ can participate in triangles of $T$ which are not part of the local part of $T$ in $S$.
\end{lemma}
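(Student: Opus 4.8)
The plan is to use the structure of the gadget $S$ established in properties P1--P3 together with the earlier lemmas on active segments. Recall from \autoref{def:polygons} that $S$ consists of two convex tube chains joined at the ends by two concave lenses, with a bounded number of tube vertices and gaps but a large number $a$ or $a+b$ of lens vertices. I want to show that, apart from $O(1)$ exceptional vertices, every vertex of $S$ is used only by triangles of the local part of $T$ in $S$, i.e. triangles having a side on one of the two lenses.

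First I would argue that the two tube chains contribute only $O(1)$ vertices total, since by construction each gadget has $O(1)$ tube vertices; so these are automatically absorbed into the $O(1)$ budget and we may restrict attention to lens vertices. Fix one lens $L$ with vertex path $C_L$. Every vertex $x$ of $C_L$ that is not a triangle apex for some lens-edge triangle — call such an $x$ ``idle'' — has all of its incident triangulation edges going to a single other vertex, namely the unique vertex of the opposite side of the tube/lens visible to $x$ (by P2, an interior lens vertex sees only the opposite lens and the tube, not its own non-neighbours, and by P3 every visibility is realized inside $S$). For an idle vertex $x$, the two triangles incident to the edge from $x$ to its ``mate'' and to $x$'s two neighbours on $C_L$ each have a side on $C_L$, hence lie in the local part. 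So the only lens vertices that could participate in a non-local triangle are (i) the $O(1)$ apex vertices of local-part triangles within each lens — there are $O(1)$ of these because a triangle apex on the opposite chain of a long narrow gadget can serve a contiguous block of lens edges, and by the ``very long and narrow'' geometry only $O(1)$ distinct apices arise — and (ii) the endpoints of $C_L$, which are shared with the tube chains and with one neighbouring gadget. Summing over the two lenses and the two tube chains, this leaves a total of $O(1)$ vertices of $S$ that can appear in triangles outside the local part.

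The one subtlety to handle carefully is case (i): I must confirm that ``local-part triangle apices'' really are $O(1)$ in number and not merely that the partition into contiguous blocks is polynomial (as in \autoref{lem:passive-is-poly}). This follows from the same geometric fact used for P2 and P3 — the tube chains bend only slightly outward and stay within width $O(\xi\nu)$ of the centre line, so a vertex on one tube chain is visible to an interval of the opposite lens of length $O(1)$ relative to the $\Theta((a+b)^{3/2})$-sized grid on which the lens lives, forcing the number of distinct apices (equivalently, the number of ``fans'' covering a lens) to be $O(1)$. I expect this visibility/width estimate to be the main obstacle, since it is the only place where one must quantify ``narrow enough'' and tie it back to the parameters $\xi$, $\nu$, and the lens size; everything else is bookkeeping against the $O(1)$ tube vertices and the two lens endpoints, combined with \autoref{lem:active-triangulation} and \autoref{cor:passive-triangulation} to know that edges never escape $S$.
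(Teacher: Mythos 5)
Your approach has a genuine gap, and the central geometric claim you lean on is the opposite of what the construction guarantees. Property P2 states that \emph{every} vertex of one lens is visible to \emph{every} vertex of the opposite lens and to every tube vertex; so an ``idle'' lens vertex has no unique ``mate,'' and your claimed restriction that a vertex of one chain sees only an $O(1)$-length interval of the opposite lens is false. Indeed, the counting in \autoref{lem:count-active} depends on there being $\binom{\ell-2}{(\ell-2)/2}$ lens-to-lens triangulations, which requires that $\Theta(\ell)$ distinct lens vertices serve as apices of opposite-lens-edge triangles in a typical triangulation of an active gadget --- so your case~(i) bound of $O(1)$ apices cannot hold. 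The characterization is also off in the other direction: a lens vertex $x$ can lie in a non-local triangle (for instance $x$ together with one opposite-lens vertex and one tube vertex) without being the apex of any lens-edge triangle, so even a correct bound on apices would not bound the vertices appearing in non-local triangles.

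The paper's proof goes a shorter route: it counts the non-local \emph{triangles} containing a lens vertex by their composition. By P2 such a triangle cannot have two vertices on the same lens (adjacent same-lens vertices would make it local, and non-adjacent ones are not visible to each other), so it has either one lens vertex and two tube vertices --- and there are $O(1)$ such triangles, since there are $O(1)$ tube--tube edges, each in at most two triangles of $T$ --- or two vertices on opposite lenses and one tube vertex, of which there is at most one per tube vertex. Either way there are $O(1)$ such triangles, hence $O(1)$ lens vertices involved. To repair your write-up, drop the idle/apex analysis and the narrowness estimate entirely and substitute this triangle-composition count; the $O(1)$ bound on tube vertices is the only resource needed.
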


\begin{proof}
$S$ has $O(1)$ vertices outside of its two lenses, so it has $O(1)$ non-lens vertices that can participate in non-local triangles. A lens vertex can participate in a non-local triangle in one of two ways: either the triangle has one lens vertex and two non-lens vertices, or it has two vertices from opposite lenses and one non-lens vertex. There are $O(1)$ edges of $T$ between non-lens vertices of $S$,  each of which is part of two triangles in $T$, so there are $O(1)$ triangles in $S\cap T$ with only one lens vertex. Each non-lens vertex of $S$ can participate in only one triangle whose other two vertices are on opposite lenses, so again there are $O(1)$ triangles of this type. Therefore, there are $O(1)$ lens vertices in non-local triangles.
\end{proof}

It is sufficiently messy to count the triangulations of active gadgets that we provide here only approximate bounds. However, the exact number of triangulations can be found in polynomial time using the algorithm for counting triangulations of simple polygons~\cite{EpsSac-TOMACS-94,RaySei-EuroCG-04,DinQiaTsa-COCOON-05}.

\begin{figure}[t]
\centering\includegraphics[width=0.8\textwidth]{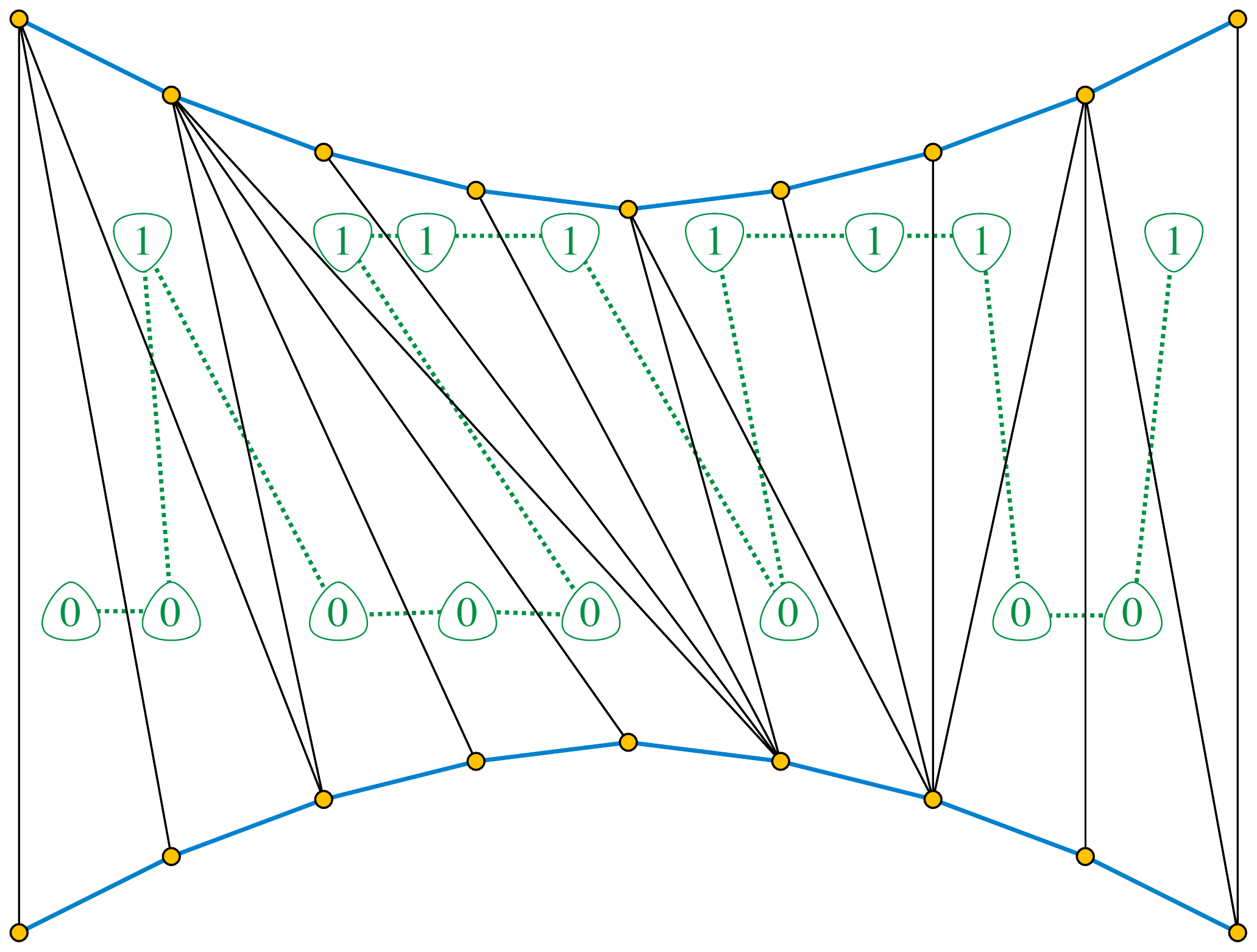}
\caption{Encoding a triangulation of the convex hull of two lenses by a sequence of bits}
\label{fig:2lenstri}
\end{figure}

\begin{lemma}
\label{lem:count-active}
Let $\Pi_S$ be the polygon bounding gadget $S$ (from \autoref{def:polygons}). Let $\ell$ be the number of lens vertices of $\Pi_S$ ($2a$ for a blue gadget, $2a+2b$ for a red gadget). Then the number of triangulations of $\Pi_S$ is $\Theta(2^\ell/\sqrt\ell)$.
\end{lemma}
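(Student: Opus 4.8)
The plan is to count triangulations of $\Pi_S$ by relating them to a well-understood combinatorial object: triangulations of the convex hull of two convex polygonal arcs facing each other (the two lenses), which are in bijection with lattice paths / monotone sequences, giving a central-binomial-coefficient count of order $\Theta(2^\ell/\sqrt\ell)$. The gadget $\Pi_S$ is, up to $O(1)$ extra tube vertices and the gaps, combinatorially a "bicyclic" or "annulus-like" polygon whose two long boundary chains are the two lenses; I would first reduce to the clean case where the gadget is exactly the convex hull of the two lenses (ignoring tube structure), argue that the $O(1)$ tube vertices and the geometry of the gaps change the count only by a constant factor (both as an upper and a lower bound), and then nail down the count of triangulations of the two-lens region.

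First I would set up the clean model. Because each lens is a concave chain as seen from inside $\Pi_S$ (equivalently, convex when viewed from outside), and the two lenses face each other across the long thin tube, a triangulation of $\Pi_S$ restricted to the region between the two lenses is determined by a "staircase": list the lens vertices of the upper lens as $u_1,\dots,u_p$ and of the lower lens as $w_1,\dots,w_q$ in order along the tube, with $p+q=\ell+O(1)$; every triangle of the triangulation has one edge on a lens and its apex on the other lens (by property P2, a lens vertex sees all of the opposite lens and the tube but not its own non-adjacent lens vertices), so the triangulation corresponds to an interleaving of the two vertex sequences — exactly a monotone lattice path from one corner to the opposite corner of a $p\times q$ grid. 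As in \autoref{fig:2lenstri}, such a triangulation is encoded by a sequence of $\ell+O(1)$ bits recording, as we sweep from one end of the gadget to the other, whether the next "rung" of the triangulation advances along the upper or the lower lens; the number of such bit-sequences is $\binom{p+q-O(1)}{p-O(1)}$. Since the tube curves only slightly and both lenses remain near the center line, the split $p,q$ is essentially balanced (each lens has $a$, resp. $a+b$, vertices, shared among the two gadgets it belongs to is not an issue here since $\Pi_S$ contains both full lenses of $S$), and in any case summing over the $O(1)$ possible boundary effects and over $O(1)$ choices for how the constantly-many tube vertices are triangulated multiplies the count by at most a constant.

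Then I would extract the asymptotics: a central or near-central binomial coefficient $\binom{\ell}{\ell/2}$ (or $\binom{2a}{a}$, $\binom{2a+2b}{a+b}$ as appropriate) is $\Theta(2^\ell/\sqrt\ell)$ by Stirling's formula, and any binomial coefficient $\binom{\ell+O(1)}{k}$ with $k=\ell/2\pm O(1)$ is within a constant factor of this. Combined with the constant-factor slack introduced by the tube vertices and the gaps, this gives both the $O(2^\ell/\sqrt\ell)$ upper bound and the $\Omega(2^\ell/\sqrt\ell)$ lower bound, establishing $\Theta(2^\ell/\sqrt\ell)$.

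The main obstacle is the geometric-to-combinatorial reduction rather than the counting: I must carefully justify that \emph{every} triangulation of $\Pi_S$ really does decompose as (a constant-size triangulation near each of the $O(1)$ tube/gap features) glued to (a two-lens "staircase" triangulation), i.e. that no triangle can use three lens vertices in a way that escapes the lattice-path description, and that visibility constraints (P2) forbid chords within a single lens. This is where the concavity of each lens as seen from inside, the near-parallel thinness of the tube, and property P3 all get used; handling the at-most-constantly-many "bad" triangles (those touching tube vertices or spanning a gap) without losing control of the count is the delicate part, but since there are only $O(1)$ of them the damage is confined to a constant multiplicative factor, which is all the statement allows for.
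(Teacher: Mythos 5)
Your lower bound matches the paper's: restrict to lens-to-lens triangulations of the convex hull of the two lenses, encode each as a balanced binary string of length $\ell-2$, and invoke the central binomial coefficient; extending each such triangulation to all of $\Pi_S$ using the $O(1)$ leftover tube vertices costs only a constant factor. That half is fine.

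The gap is in your upper bound, specifically in the claim that there are ``at most constantly many bad triangles'' and hence only a constant multiplicative correction. Property P2 makes every lens vertex visible to every tube vertex, so a single internal tube vertex can be the apex of a \emph{fan} of up to $\Theta(\ell)$ triangles whose bases are lens edges; the number of triangles touching tube vertices is therefore not $O(1)$ (what is $O(1)$ is the number of tube vertices, which is a different thing --- compare \autoref{lem:passive-is-poly}, where exactly these fans make the count of passive local parts polynomial rather than constant in $\ell$). If you bound the triangulations by ``(number of configurations of bad triangles) $\times$ (number of staircases)'', the first factor is polynomial in $\ell$, not constant, and you only get $O(2^\ell\,\ell^{O(1)}/\sqrt\ell)$, which does not prove the stated $\Theta$. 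The fix, which is what the paper does, is to couple the two choices: assigning $j$ lens edges to tube-vertex fans shortens the staircase by $j$ steps, and the whole local part can be encoded as a \emph{single} balanced binary string whose length exceeds $\ell$ by only an additive constant (the fan sizes become runs of identical bits, with $O(1)$ terminator bits in total). That unified encoding is what yields $O(2^{\ell+O(1)}/\sqrt\ell)=O(2^\ell/\sqrt\ell)$; your argument as written does not reach it.
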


\begin{proof}
One can obtain $\Omega(2^\ell/\sqrt\ell)$ triangulations by choosing a maximal set of non-crossing lens-to-lens diagonals in $\Pi_S$, which form the edges of a triangulation of the convex hull of the two lenses, and then choosing arbitrarily a triangulation of the remaining polygons of $O(1)$ vertices on either side of the convex hull of the two lenses.
The triangles within any triangulation of the convex hull of the two lenses have a path as their dual graph,
and if they are ordered along the path then the triangulation itself is determined by a sequence of bits (one bit per triangle) that denote whether each triangle includes an edge of one lens or of the other lens \cite{HurNoyUrr-DCG-99} (\autoref{fig:2lenstri}). There are $\ell-2$ bits in the sequence, exactly half of which must be zeros and half of which must be ones, so the number of  these triangulations is
\[
\binom{\ell-2}{(\ell-2)/2}=\Omega\left(\frac{2^\ell}{\sqrt\ell}\right).
\]

\begin{figure}[t]
\centering\includegraphics[width=\textwidth]{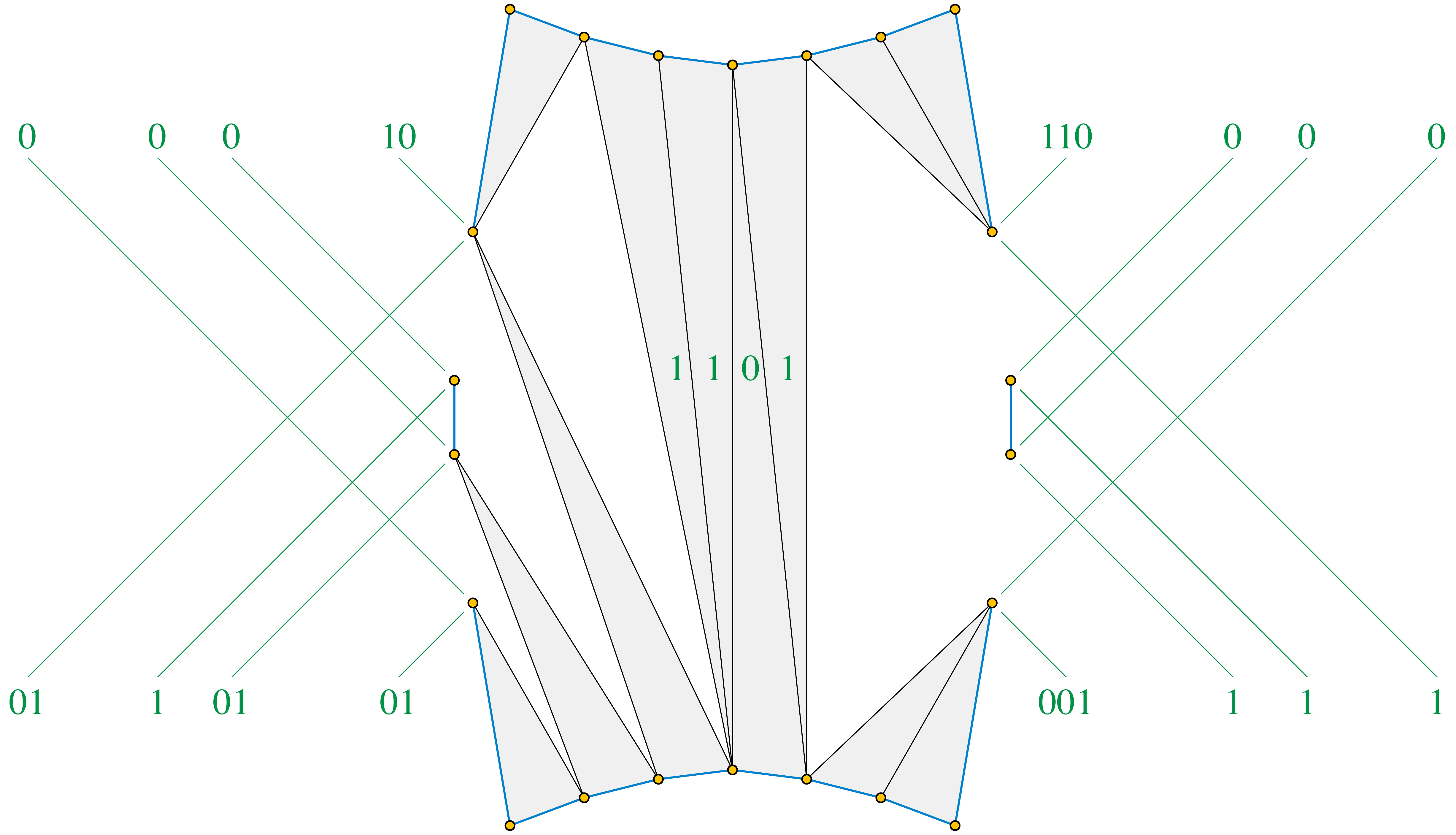}
\caption{Encoding the local part of a triangulation by a sequence of bits}
\label{fig:encode-local}
\end{figure}

In the other direction, we can produce an overestimate of the number of  distinct local parts of triangulations of $\Pi_S$ by a similar method of counting balanced binary strings of slightly greater length.
As in \autoref{fig:2lenstri}, consider one of the two lenses as being ``upper'' (labeled by 1's in the binary string) and the other as being ``lower'' (labeled by 0's). Similarly, we can describe the two sides of the tube of the gadget as being left or right. Suppose also that each side of a tube of the given gadget has $t$ internal vertices (vertices that do not belong to either lens; for our gadgets, $t\in\{4,6\}$). We describe the local part of a triangulation by a sequence of bits, as follows (\autoref{fig:encode-local}):
\begin{itemize}
\item $t$ blocks of 0-bits, each terminated by a single 1-bit, specifying how many edges of the lower lens form triangles whose apex is at each of the internal vertices of the left side of the tube.
\item $t$ blocks of 1-bits, each terminated by a single 0-bit, specifying how many edges of the upper lens form triangles whose apex is at each of the internal vertices of the left side of the tube.
\item A sequence of bits as before describing the left-to-right sequence of triangles that have all three vertices on the two lenses, with a 0-bit for a triangle with a side on the lower lens and a 1-bit for a triangle with a side on the upper lens.
\item $t$ blocks of 0-bits, each terminated by a single 1-bit, specifying how many edges of the lower lens form triangles whose apex is at each of the internal vertices of the right side of the tube.
\item $t$ blocks of 1-bits, each terminated by a single 0-bit, specifying how many edges of the upper lens form triangles whose apex is at each of the internal vertices of the right side of the tube.
\end{itemize}
The resulting sequence of bits uniquely describes each distinct local part, has $2\ell-2+4t$ bits, and has equal numbers of 0- and 1-bits. Not all such sequences of bits describe a valid local part (they may specify triangles connecting the lower and upper lenses to the tubes that cross each other) but this is non-problematic. There are $O(2^\ell/\sqrt\ell)$ possible sequences of bits of this type, so there are $O(2^\ell/\sqrt\ell)$ local parts of triangulations of $\Pi_S$. Each local part leaves remaining untriangulated regions on the left and right sides of $\Pi_S$ with $O(1)$ vertices (as in the proof of \autoref{lem:nonlocal}) so it can be extended to a complete triangulation in $O(1)$ distinct ways. Therefore, the total number of triangulations of $\Pi_S$ is $O(2^\ell/\sqrt\ell)$.
\end{proof}

\subsection{Global counting}

The key properties of the number of triangulations of $P_A$ (with an appropriate choice of $a$ and $b$) that allow us to prove our counting reduction are that 
\begin{itemize}
\item the number of triangulations coming from any particular maximum-size non-crossing subset of $A$ can be determined only from the size of the arrangement and the number of red segments in the subset, and
\item non-crossing subsets that use fewer segments than the maximum, or that use the maximum total number of segments but fewer than the maximum possible number of red segments, contribute a negligible fraction of the total number of triangulations.
\end{itemize}
In this section we make these notions precise.

\begin{figure}[t]
\centering\includegraphics[width=0.5\textwidth]{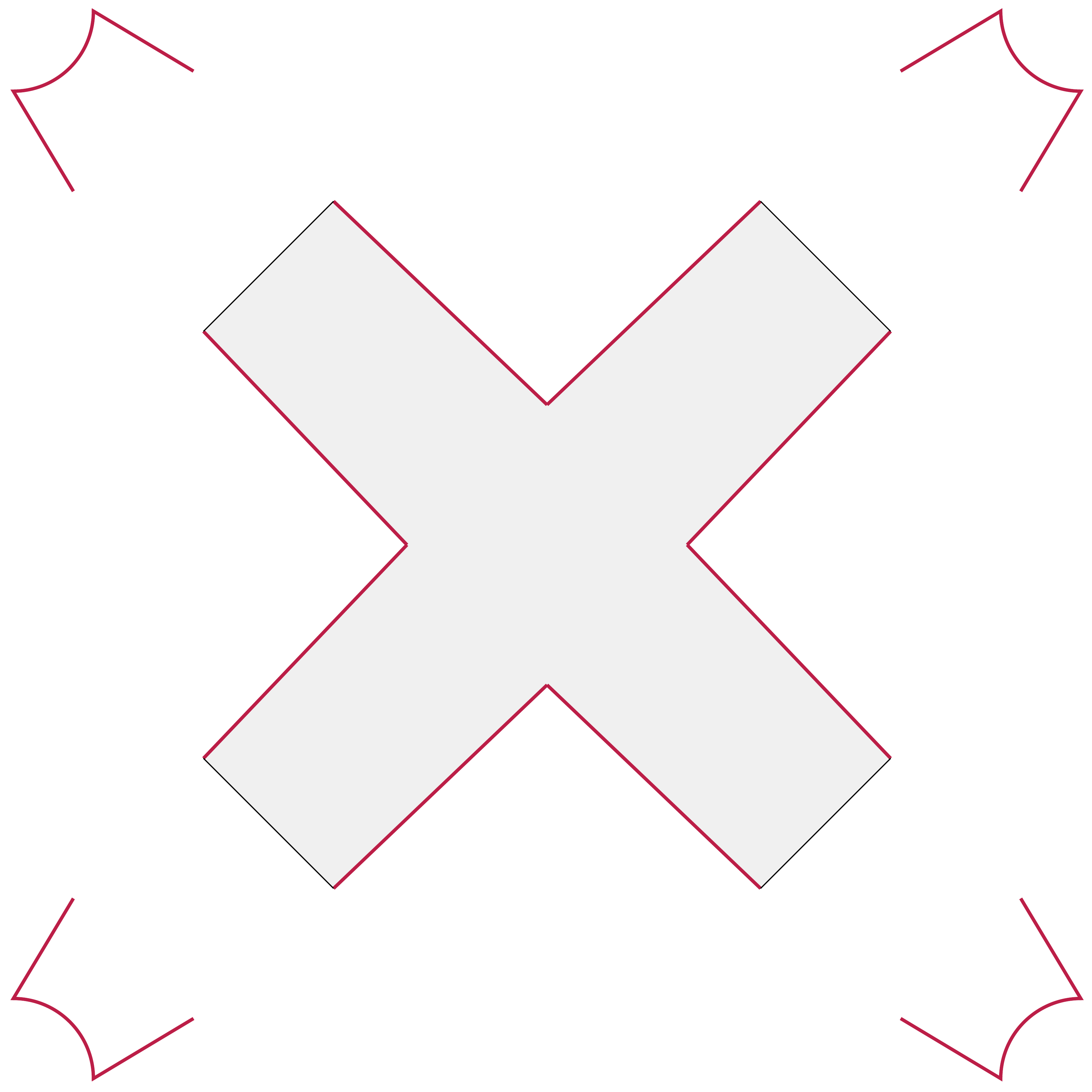}
\caption{The twelve-sided polygon in the center of two crossing red gadgets}
\label{fig:redcross}
\end{figure}

\begin{definition}
We define the following numerical parameters:
\begin{itemize}
\item $\alpha$ is the number of triangulations of polygon $\Pi_S$ (as defined in \autoref{def:polygons}) for  a blue gadget $S$, counting only the triangulations in which at least one triangle has three lens vertices. The definition of the gadgets causes gadgets of the same color to have equal numbers of triangulations.
\item $\beta$ is the number of triangulations of $\Pi_S$ for  a red gadget $S$, again counting only the triangulations in which at least one triangle has three lens vertices.
\item $\gamma$ is the number of triangulations of the twelve-sided polygon formed in the center of two crossing red gadgets by the four shared vertices of the two gadgets and their four neighbors in each gadget (\autoref{fig:redcross}). The constraints on how gadgets can cross imply that this number is independent of the choice of which two gadgets cross.
\item $q_{n,r}$, for any $n\ge 0$ and $0\le r\le n/2$, is $\alpha^{n-r} \beta^r \gamma^{(n-2r)/2} a^{2r} (a+b)^{2(n-r)} 2^{3r}$.
\end{itemize}
\end{definition}

The requirement in the definition of $\alpha$ and $\beta$ that the triangulations have a triangle with three lens vertices implies that these triangulations are exactly the ones that can appear in triangulations of $P_A$ for which the corresponding segment of $A$ is active. We do not derive an explicit formula for $\alpha$ and $\beta$, although they are bounded to within a constant factor by \autoref{lem:count-active}.  Therefore, it is important for us that they can be calculated in polynomial time by dynamic programming, as (for appropriate choices of $a$ and $b$) they count triangulations of simple polygons of polynomial size. (The constraint on the existence of a triangle with three lens vertices is not problematic for this dynamic programming algorithm.)

\begin{figure}[t]
\centering\includegraphics[width=0.65\textwidth]{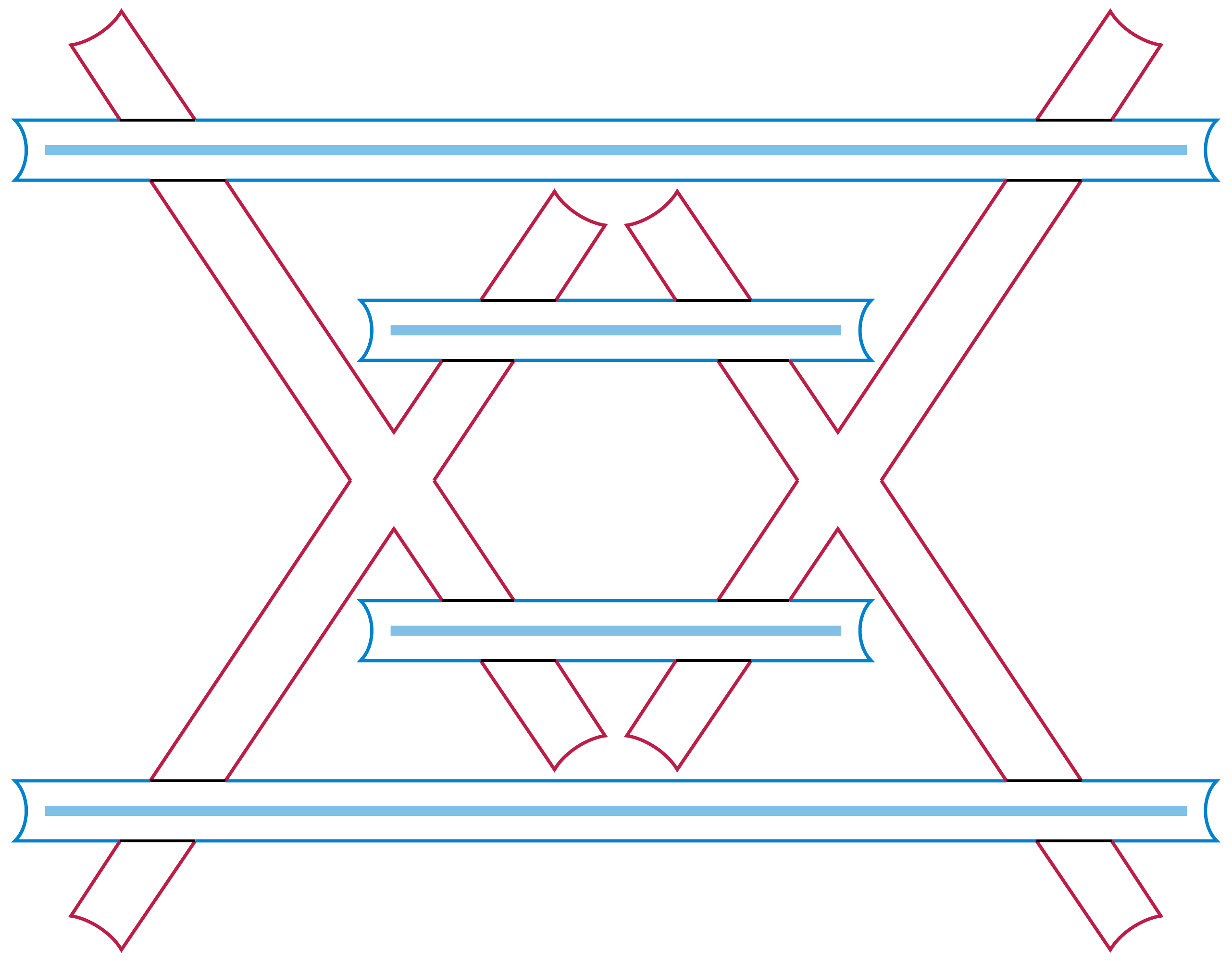}
\caption{A maximum-non-crossing subset of a red--blue arrangement (consisting of four blue segments and no red segments), and the associated partition of $P_A$ into smaller polygons (shown by the black segments). These smaller polygons include a polygon $\Pi_S$ for the gadget $S$ of each active segment $s$, two polygons $Q_L$ for the lenses of each gadget of a passive segment, and a twelve-sided polygon for each crossing of two passive red gadgets. Not shown: the central quadrilaterals of passive blue gadgets and of passive red gadgets crossed by an active red gadget.}
\label{fig:polypart}
\end{figure}

\begin{lemma}
\label{lem:exact-count}
Let $A$ be a red--blue arrangement with $n$ red and $n$ blue segments.
Let $\Xi$ be a maximum-size non-crossing subset of $A$, with $r=r(\Xi)$ red segments. Then the number of triangulations of $P_A$ in which $\Xi$ forms the active set is $q_{n,r}$.
\end{lemma}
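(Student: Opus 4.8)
The plan is to exhibit an explicit bijection between the triangulations of $P_A$ whose active set equals $\Xi$ and the independent choices of a triangulation for each member of a finite family of sub-polygons determined by $\Xi$, and then to read $q_{n,r}$ off as the product of the numbers of triangulations of those sub-polygons.

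First I would describe the subdivision of $P_A$ that is forced once $\Xi$ is taken to be the active set. Let $T$ be a triangulation of $P_A$ with active set $\Xi$. For each active $s\in\Xi$ with gadget $S$, \autoref{lem:active-triangulation} says that $T\cap\Pi_S$ is a triangulation of $\Pi_S$; since $s$ is active this triangulation contains a lens-to-lens triangle, hence is one of the $\alpha$ (if $s$ is blue) or $\beta$ (if $s$ is red) triangulations appearing in the definitions of those parameters, and in particular all the boundary edges of $\Pi_S$ lie in $T$. Removing the closed regions $\Pi_S$ over all active $s$, property P3 guarantees that no remaining edge of $T$ crosses from one gadget into another, so the rest of $P_A$ is cut along these forced edges into a finite list of small polygons. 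Reading off the telescope shape of the gadgets (\autoref{fig:telescope}) together with \autoref{cor:passive-triangulation}, I would identify these pieces as: a polygon $Q_L$ for each of the two lenses of each passive gadget (every passive segment has both of its lenses separated from the rest of its gadget, because for a passive blue segment its two red crossings are both active, and for a passive red segment the two blue crossings, which are the outermost of its blue--red--blue crossings, are both active); the twelve-sided polygon of \autoref{fig:redcross} at each red--red crossing both of whose segments are passive; and a constant number of quadrilaterals cut out of the tubes of passive gadgets between consecutive gaps, including the central quadrilaterals of the passive blue gadgets and of the passive red gadgets crossed by an active red gadget (cf. \autoref{fig:polypart}). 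Conversely, picking a lens-to-lens triangulation of each active $\Pi_S$ and an arbitrary triangulation of each remaining piece and gluing along the shared forced edges yields a triangulation of $P_A$; its active set contains $\Xi$ by the choices made on the active gadgets, and is contained in $\Xi$ because each passive gadget has both lenses cut off and so admits no triangle reaching from one of its lenses to the other. This gives the bijection, so the quantity to be computed is the product over all pieces of the number of triangulations of the piece.

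For the bookkeeping I would use only the cycle structure of $A$, not any special feature of a particular arrangement. By \autoref{lem:rb-max}, $\Xi$ meets each alternating cycle in a monochromatic set and has $n$ segments altogether, so it contains $r$ active red and $n-r$ active blue segments, and hence there are $n-r$ passive red and $r$ passive blue segments. Each red segment has exactly one red--red crossing, so the $n$ red segments fall into $n/2$ monochromatic pairs; no pair is active--active, since two crossing segments cannot both lie in the non-crossing set $\Xi$, so exactly $r$ pairs are active--passive and $(n-2r)/2$ pairs are passive--passive. Therefore, using \autoref{lem:count-lenstri} and the list of pieces, the active gadgets contribute $\alpha^{n-r}\beta^{r}$, the passive-blue lenses contribute $a^{2r}$, the passive-red lenses contribute $(a+b)^{2(n-r)}$, the passive--passive red pairs contribute $\gamma^{(n-2r)/2}$, and the remaining quadrilateral pieces attached to the $r$ passive blue gadgets and the $r$ passive red gadgets crossed by an active red gadget contribute $2^{3r}$. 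The product of these quantities is exactly $q_{n,r}$.

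The main obstacle is the case analysis in the first step: pinning down precisely which small polygons occur in the subdivision, checking against the gadget geometry (\autoref{fig:telescope}, \autoref{fig:redcross}, \autoref{fig:polypart}) that the listed ones are all of them, and verifying that the leftover tube pieces between consecutive gaps are quadrilaterals and, taken all together, contribute exactly the factor $2^{3r}$. A secondary technical point is to make rigorous, via property P3, the claim that edges of $T$ never cross between gadgets, so that the subdivision is genuine and its pieces can be triangulated independently of one another.
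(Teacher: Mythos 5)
Your proposal is correct and follows essentially the same route as the paper: decompose $P_A$ along the forced boundaries of the active gadgets into the polygons $\Pi_S$, the passive lens polygons $Q_L$, the twelve-gons at passive--passive red crossings, and the $3r$ central quadrilaterals, then multiply the independent triangulation counts to obtain $q_{n,r}$. You in fact supply two details the paper leaves implicit --- the explicit gluing converse and the verification (via \autoref{lem:rb-max} and the blue--red--blue crossing order) that every passive gadget has both lenses cut off by active crossers --- so no gap remains beyond the routine check of the quadrilateral count that you already flag.
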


\begin{proof}
By \autoref{lem:active-triangulation}, the triangulations with $\Xi$ active can be partitioned into the polygons $\Pi_S$ for the active segments of $\Xi$ and the remaining smaller polygons left by the removal of these polygons, each of which can be triangulated independently and each of which contributes a factor to the product in the definition of $q_{n,r}$. \autoref{fig:polypart} illustrates this partition into polygons. There are $r$ active red segments and $n-r$ active blue segments, each of which contributes a polygon $\Pi_S$ to the partition and a factor of $\beta$ or $\alpha$ (respectively) to the product by which $q_{n,r}$ was defined. There are $2r$ lenses of passive blue segments and $2(n-r)$ lenses of passive red segments, each of which contributes a polygon $Q_L$ containing the lens and a factor of $a$ or $(a+b)$ to the product, respectively, by \autoref{cor:passive-triangulation} and \autoref{lem:count-lenstri}.

The $r$ active red segments and $r$ passive red segments that they cross leave $n-2r$ red segments that are passive but not crossed by another active red segment. These $n-2r$ red segments form $(n-2r)/2$ twelve-sided polygons where their gadgets cross each other in pairs, with each pair contributing a factor of $\gamma$ to the product.

The remaining factor of $2^{3r}$ in the product comes from the $r$ passive blue gadgets, each of which has a central quadrilateral between the two red active segments that cross it, and from the $r$ passive red gadgets that are crossed by an active red gadget, each of which has two central quadrilaterals between the consecutive pairs of the three active segments that cross it. These $3r$ quadrilaterals each can be triangulated in two ways.
\end{proof}

\begin{corollary}
\label{cor:less-red}
Suppose that $a\ge b$. Then $q_{n,r}/q_{n,r-1}=\Theta(2^{2b})$.
\end{corollary}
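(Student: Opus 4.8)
The plan is to evaluate the ratio directly from the definition of $q_{n,r}$ and then substitute the asymptotics of the three ``hard'' quantities $\alpha$, $\beta$, and $\gamma$.

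First I would write out $q_{n,r}$ and $q_{n,r-1}$ and divide. Replacing $r$ by $r-1$ changes each of the six exponents in the product defining $q_{n,r}$ by a fixed amount---the exponent of $\alpha$ increases by one, that of $\beta$ decreases by one, that of $\gamma$ increases by one, that of $a$ decreases by two, that of $(a+b)$ increases by two, and that of $2$ decreases by three---so all the dependence on $n$ and $r$ cancels and
\[
\frac{q_{n,r}}{q_{n,r-1}}=\frac{8\,\beta\,a^{2}}{\alpha\,\gamma\,(a+b)^{2}}.
\]

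Next I would estimate $\alpha$, $\beta$, and $\gamma$. By \autoref{lem:count-active}, the polygon $\Pi_S$ of a blue gadget has $\Theta(2^{2a}/\sqrt{a})$ triangulations and that of a red gadget has $\Theta(2^{2a+2b}/\sqrt{a+b})$ triangulations. The parameters $\alpha$ and $\beta$ count only those triangulations in which some triangle has three lens vertices; this extra condition cannot increase the counts, and it does not destroy the $\Omega(\cdot)$ lower bounds, because the explicit families of triangulations used in the proof of \autoref{lem:count-active} all begin by choosing a maximal non-crossing set of lens-to-lens diagonals, which triangulates the convex hull of the two lenses entirely into triangles with three lens vertices. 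Hence $\alpha=\Theta(2^{2a}/\sqrt{a})$ and $\beta=\Theta(2^{2a+2b}/\sqrt{a+b})$. Since $\gamma$ is the number of triangulations of one fixed twelve-sided polygon, whose shape does not depend on $a$, $b$, or the arrangement, $\gamma=\Theta(1)$. Substituting these estimates gives
\[
\frac{q_{n,r}}{q_{n,r-1}}=\Theta\!\left(2^{2b}\cdot\frac{a^{5/2}}{(a+b)^{5/2}}\right),
\]
and the hypothesis $a\ge b$ forces $a\le a+b\le 2a$, so the remaining fraction lies in $[2^{-5/2},1]$ and is $\Theta(1)$; this yields $q_{n,r}/q_{n,r-1}=\Theta(2^{2b})$.

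The only point needing any care is transferring the estimates of \autoref{lem:count-active} to $\alpha$ and $\beta$, since those parameters discard the triangulations with no triangle spanning three lens vertices. The upper bound is immediate, and for the lower bound one just has to observe that the family exhibited in the proof of \autoref{lem:count-active} consists entirely of triangulations that do span three lens vertices, so nothing is lost; the rest is bookkeeping of exponents together with the elementary fact that $a^{5/2}/(a+b)^{5/2}=\Theta(1)$ when $b\le a$.
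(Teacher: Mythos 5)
Your proof is correct and takes essentially the same route as the paper: the identical cancellation gives $q_{n,r}/q_{n,r-1}=8\beta a^{2}/\bigl(\alpha\gamma(a+b)^{2}\bigr)$, and the same appeal to \autoref{lem:count-active} together with $a\ge b$ finishes the estimate. Your explicit observation that the lower-bound family in \autoref{lem:count-active} consists entirely of triangulations containing a lens-to-lens triangle (so that restricting to such triangulations in the definitions of $\alpha$ and $\beta$ preserves the $\Theta$ bounds) is a small point the paper leaves implicit, and is worth having.
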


\begin{proof}
Expanding the definition of $q_{n,r}$ and eliminating common factors with $q_{n,r-1}$ produces
\[
\frac{q_{n,r}}{q_{n,r-1}}
=
\frac{8\beta a^2}{\alpha\gamma(a+b)^2}
=
\Theta\Bigl(\frac{\beta}{\alpha}\Bigr).
\]
Here, $\gamma$ has been eliminated from the right hand side as a constant, and the assumption that $a\ge b$ implies that $1/4\le a^2/(a+b)^2\le 1$ allowing the elimination of those factors as well.
Applying \autoref{lem:count-active} to bound both $\alpha$ and $\beta$ to within constants
gives $\beta/\alpha=\Theta(2^{2b})$, the stated bound.
\end{proof}

\begin{lemma}
\label{lem:log-good}
Suppose the parameters $a$ and $b$ are both upper-bounded by polynomials of $n$. Then
\[
\log_2 q_{n,r} = 2an + 2rb \pm O(n\log n),
\]
where the constant in the $O$-notation depends on the bounds on $a$ and $b$.
\end{lemma}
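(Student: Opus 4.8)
The plan is to take the logarithm of the explicit product defining $q_{n,r}$ and estimate each factor. Recall that
\[
q_{n,r} = \alpha^{n-r}\,\beta^{r}\,\gamma^{(n-2r)/2}\,a^{2r}\,(a+b)^{2(n-r)}\,2^{3r},
\]
so that
\[
\log_2 q_{n,r} = (n-r)\log_2\alpha + r\log_2\beta + \tfrac{n-2r}{2}\log_2\gamma + 2r\log_2 a + 2(n-r)\log_2(a+b) + 3r.
\]
The idea is that the dominant contribution is exactly $2an + 2rb$, coming from $\log_2\alpha$, $\log_2\beta$, and the lens factors, with everything else absorbed into the $O(n\log n)$ error term.

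First I would use \autoref{lem:count-active} to pin down $\log_2\alpha$ and $\log_2\beta$. That lemma gives the number of triangulations of $\Pi_S$ as $\Theta(2^\ell/\sqrt{\ell})$ where $\ell = 2a$ for a blue gadget and $\ell = 2a+2b$ for a red gadget; the restriction in the definitions of $\alpha$ and $\beta$ to triangulations containing a lens-to-lens triangle only changes these counts by a constant factor (indeed the lower-bound construction in that lemma already produces such triangulations). Hence $\log_2\alpha = 2a \pm O(\log a)$ and $\log_2\beta = 2a+2b \pm O(\log(a+b))$. Substituting, the leading terms become $(n-r)(2a) + r(2a+2b) = 2an + 2rb$, which is precisely the claimed main term. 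The error so far is $(n-r)\cdot O(\log a) + r\cdot O(\log(a+b)) = O(n\log n)$, using that $a,b$ are polynomially bounded in $n$ so their logarithms are $O(\log n)$.

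It remains to check that the leftover terms $\tfrac{n-2r}{2}\log_2\gamma + 2r\log_2 a + 2(n-r)\log_2(a+b) + 3r$ are all $O(n\log n)$. The quantity $\gamma$ counts triangulations of a twelve-sided polygon, hence is a constant, so $\log_2\gamma = O(1)$ and that term is $O(n)$. The terms $2r\log_2 a$ and $2(n-r)\log_2(a+b)$ are each $O(n\log n)$ since $r \le n$ and $\log_2 a, \log_2(a+b) = O(\log n)$. Finally $3r = O(n)$. Collecting everything gives $\log_2 q_{n,r} = 2an + 2rb \pm O(n\log n)$, with the constant in the $O$-notation depending on the exponents in the polynomial bounds on $a$ and $b$, as claimed. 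I do not anticipate a genuine obstacle here — the only point requiring a little care is confirming that imposing the ``has a lens-to-lens triangle'' condition in the definitions of $\alpha$ and $\beta$ does not change the $\Theta(2^\ell/\sqrt\ell)$ estimate by more than a constant factor, which follows because \autoref{lem:count-active}'s lower-bound family consists entirely of such triangulations and its upper bound is unconditional.
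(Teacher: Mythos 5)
Your proposal is correct and follows essentially the same route as the paper: take $\log_2$ of the defining product, use \autoref{lem:count-active} to get $\log_2\alpha = 2a\pm O(\log n)$ and $\log_2\beta = 2a+2b\pm O(\log n)$, and absorb the $\gamma$, $a$, $(a+b)$, and $2^{3r}$ factors into the $O(n\log n)$ error. Your extra remark that the lens-to-lens restriction in the definitions of $\alpha$ and $\beta$ only costs a constant factor is a point the paper handles in the surrounding text rather than in the proof itself, and is correctly justified.
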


\begin{proof}
In the formula for $q_{n,r}$, each factor of $\alpha$ contributes $2a-O(\log n)$ to the logarithm,
each factor of $\beta$ contributes $2a+2b-O(\log n)$, each factor of $a$ or $b$ contributes $O(\log n)$, and each factor of $\gamma$ or $2$ contributes $O(1)$. The result follows by adding these contributions according to their exponents.
\end{proof}

\begin{lemma}
\label{lem:log-bad}
Suppose the parameters $a$ and $b$ are both upper-bounded by polynomials of $n$.
Let $p$ denote the number of triangulations of $P_A$ that do not have a maximum-size non-crossing subset of $A$ as their active set. Then
\[
\log_2 p \le 2a(n-1)+2b(n-1) + O(n\log n).
\]
\end{lemma}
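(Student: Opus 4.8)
The plan is to bound $p$ by classifying triangulations according to their active set. By \autoref{lem:active-non-crossing} the active set of any triangulation is non-crossing, and by \autoref{lem:rb-max} any non-crossing subset of exactly $n$ of the $2n$ segments of $A$ already has maximum size; hence every triangulation counted by $p$ has active set equal to some non-crossing set $X$ with $|X|\le n-1$. There are at most $2^{2n}=2^{O(n)}$ such candidate sets $X$, so it suffices to bound, for each fixed $X$ with $|X|=m\le n-1$, the number $N(X)$ of triangulations of $P_A$ whose active set is exactly $X$, and then to use $\log_2 p\le 2n+\max_X\log_2 N(X)$.

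For a fixed such $X$ and a triangulation $T$ with active set $X$, I would first record the restriction of $T$ to each active gadget. By \autoref{lem:active-triangulation}, for each $s\in X$ the triangles of $T$ meeting the interior of $\Pi_S$ form a triangulation of $\Pi_S$; since the segments of $X$ are pairwise non-crossing, the polygons $\Pi_S$ ($s\in X$) are pairwise interior-disjoint; and, as in the proof of \autoref{lem:active-triangulation}, every edge of $T$ lies in a single gadget. Consequently $T$ is determined by the family $\bigl(T\cap\Pi_S\bigr)_{s\in X}$ together with the triangulation $T$ induces on the closure $R$ of $P_A\setminus\bigcup_{s\in X}\Pi_S$ (every triangle of $T$ lies in some $\Pi_S$ with $s\in X$, or else inside $R$). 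By \autoref{lem:count-active}, each $T\cap\Pi_S$ is one of $O(2^{\ell})=O(2^{2a+2b})$ triangulations of $\Pi_S$, so the number of choices for the whole family is at most $O(2^{2a+2b})^{m}=2^{(2a+2b)(n-1)+O(n)}$.

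It remains to show that $R$ admits only $2^{O(n\log n)}$ triangulations; this is the heart of the proof, and it cannot come from a bound in the number of vertices of $R$, since $R$ can have polynomially many vertices. Instead I would exploit the concavity of the lenses. Each passive gadget contributes to $R$ its two lens chains (possibly already cut off from the rest of the gadget by a crossing active gadget, as in \autoref{cor:passive-triangulation}) together with $O(1)$ tube vertices, and each crossing of two gadgets not contained in an active hull contributes a region with $O(1)$ vertices. For each lens chain, property P2 forces every triangle with a side on the chain to have its apex among $O(1)$ fixed vertices, and these apexes must appear in monotone order along the chain, so there are only $\mathrm{poly}(a+b)=\mathrm{poly}(n)$ ways to triangulate the triangles incident to a given lens chain (refining \autoref{lem:count-lenstri} and \autoref{lem:passive-is-poly}); once those triangles are fixed, what remains of each passive gadget and each crossing region has $O(1)$ vertices (as in the counting of \autoref{lem:nonlocal}) and so admits $O(1)$ triangulations. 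Multiplying $\mathrm{poly}(n)$ over the $\le 2n$ passive gadgets and $O(1)$ over the $O(n)$ remaining $O(1)$-vertex pieces gives $2^{O(n\log n)}$ triangulations of $R$. Hence $\log_2 N(X)\le 2a(n-1)+2b(n-1)+O(n\log n)$, and adding the $O(n)$ contribution of the number of choices of $X$ yields the claimed inequality.

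The main obstacle is precisely the bound on the triangulations of the ``non-active'' region $R$: $R$ has super-linearly many vertices, so one must use something special about where those vertices sit. The key point is that the only super-linear concentrations of vertices are inside the lenses, and the lenses are so sharply concave (property P2) that each is triangulated in only polynomially many ways regardless of its size; the remaining bookkeeping — making precise the decomposition of $T$ into its restrictions to the active $\Pi_S$ and to $R$, verifying interior-disjointness of the active hulls, and checking that the leftover pieces really have $O(1)$ vertices once the lens-incident triangles are fixed — is routine given \autoref{lem:active-triangulation}, \autoref{cor:passive-triangulation}, and \autoref{lem:nonlocal}.
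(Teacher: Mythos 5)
Your proof is correct and follows essentially the same route as the paper's: classify triangulations by their (necessarily non-maximum, hence $\le n-1$-element) active set, charge each active gadget $O(2^{2a+2b})$ choices via \autoref{lem:count-active}, and show the passive/leftover region contributes only $2^{O(n\log n)}$ via the polynomial lens-count of \autoref{lem:passive-is-poly} and the $O(1)$-vertex residual pieces of \autoref{lem:nonlocal}. The only difference is cosmetic — you count full triangulations of the active $\Pi_S$ and then the residual region $R$, whereas the paper counts ``local parts'' of all gadgets and then their extensions — but the bounds and the supporting lemmas are identical.
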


\begin{proof}
We bound $p$ by a product of the number of active sets, the numbers of local parts of triangulations for each gadget, and the numbers of ways of extending a choice of local parts to a full triangulation. This corresponds to bounding $\log_2 p$ by a sum of terms, one for each of these factors.

There are $2n$ segments of $A$, so at most $2^{2n}$ subsets of these segments that can be chosen as the active set, contributing a term of at most $2n$ to the logarithm.
By \autoref{lem:active-non-crossing}, the active set can be only a maximum-size non-crossing subset or a non-crossing subset of smaller than the maximum size $n$, and this lemma considers only the case where it is of smaller than maximum size, so each non-maximum active set can have at most $n-1$ active segments. By \autoref{lem:count-active}, the number of choices of local parts of gadgets for active segments adds at most $2a(n-1)+2b(n-1)-O(n\log n)$ to the logarithm. The local parts of the gadgets for passive segments can be chosen in a number of ways per gadget that is polynomial in $a$ and $b$ by \autoref{lem:passive-is-poly}, adding another $O(n\log n)$ term to the logarithm.
Once the active gadgets have been triangulated and the local parts of the passive gadgets have been chosen, the regions that remain to be triangulated have a total of $O(n)$ vertices by \autoref{lem:nonlocal}, so the number of ways to triangulate them contributes another $O(n)$ to the logarithm.
\end{proof}

\subsection{Completing the reduction}

We have already described how to transform an arrangement $A$ into a polygon $P_A$, modulo the choice of the parameters $a$ and $b$. To complete the description of our reduction, we need to set $a$ and $b$ and we need to describe how to recover the number of maximum-size non-crossing subsets of $A$ from the number $N$ of triangulations of $P_A$.

Given a red--blue arrangement $A$ with $n$ red and $n$ blue segments, set $b=2n$.
By \autoref{cor:less-red}, with this choice, $q_{n,r-1}$ and $q_{n,r}$ differ by a factor of $\Theta(2^{4n})$, much larger than the $2^{2n}$ bound on the total number of non-crossing sets of segments. Therefore, for all sufficiently large $n$ and all $r\le n$, the number of triangulations with a maximum-size non-crossing set of active segments that includes fewer than $r$ red segments is strictly less than $q_{n,r}$.
Next, set $a=3n^2$. This is large enough that, for sufficiently large $n$, the number $p$ of triangulations whose active set is non-maximum is strictly smaller than $q_{n,0}$. This follows because the bound of \autoref{lem:log-good} on $q_{n,0}$ has a larger multiple of $a$ than does the bound of \autoref{lem:log-bad} on $p$, and the $O(n\log n)$ term and larger multiple of $b$ in \autoref{lem:log-bad} are not large enough to make up for this difference.

With these choices of $a$ and $b$, we also have that $q_{n-1,n-1}$ is smaller than $q_{n,0}$ by a factor of $2^{\Omega(n^2)}$. Thus, for all sufficiently large $n$, if $A'$ is any red--blue arrangement with fewer than $n$ red and blue segments, the total number of triangulations of $P_{A'}$ is strictly smaller than $q_{n,0}$. This implies that, from the total number $N$ of triangulations of $P_A$ we can unambiguously determine the number of segments in~$A$.

With these considerations, we are ready to prove the correctness of our reduction:

\begin{proof}[Proof of \autoref{thm:main}]
Counting triangulations of a given polygon $P$ is in \sharpp{} by \autoref{lem:can-count},
so we can complete the proof by describing a polynomial-time counting reduction from the number of maximum-size non-crossing subsets of a red--blue arrangement $A$ (proved \sharpp-hard in \autoref{lem:red-blue-hard}).
To transform $A$ into a polygon, let $n$ be the number of red segments in $A$. Our reduction will work for all sufficiently large $n$, larger than some fixed threshold $n'$; the reduction algorithm compares $n$ to this threshold and, for $n<n'$, directly computes the number of maximum-size non-crossing subsets of $A$ (by a brute force search over all subsets, in time $O(2^{n'} (n')^{O(1)})=O(1)$) and constructs a polygon with the same number of triangulations using the formula for numbers of triangulations of lens polygons of \autoref{lem:count-lenstri}.
Otherwise, it chooses $a=3n^2$ and $b=2n$ as above, and constructs the polygon $P_A$.

To transform the number $N$ of triangulations of the given polygon into the number of maximum-size non-crossing subsets of $A$, first check whether this number is sufficiently small that it comes from our special-case construction for bounded values of $n$. If so, directly decode it to the number of maximum-size non-crossing subsets.

In the remaining case, recover $n$ as the unique value that could have produced a polygon $P_A$ with $N$ triangulations. From $n$, compute the associated parameters $a$ and $b$, use the dynamic programming algorithm for counting triangulations of simple polygons to compute the parameters $\alpha$ and $\beta$, and use these values to compute the parameters $q_{n,i}$. For each choice of $r$ from $n$ down to $0$ (in decreasing order)
compute the number of maximum-size non-crossing subsets with $r$ red segments as $\lfloor N'/q_{n,r}\rfloor$ (where $N'$ starts with the value $N$ and is reduced as the algorithm progresses) and then replace $N'$ by $N'$ mod $q_{n,r}$ before proceeding to the next value of $r$.
Sum the numbers of maximum-non-crossing subsets obtained for each value of $r$ to obtain the total number of maximum-size non-crossing subsets.

When computing the number of non-crossing subsets for each value of $r$, the contributions from triangulations whose active segments include more red segments than $r$ will already have been subtracted off, by induction.
The contributions from triangulations whose active segments include fewer red segments than $r$,
or from triangulations that do not have maximum-size non-crossing sets of active segments,
will sum to less than a single multiple of the number of triangulations for each non-crossing set with the given number of red segments, as discussed above. Therefore, each number of non-crossing subsets is computed correctly.
\end{proof}

\section{Conclusions and open problems}

We have shown that counting triangulations of polygons with holes is \sharpp-complete under Turing reductions. It would be of interest to tighten this result to show completeness under counting reductions, or even under parsimonious reductions. Can this be done, either by strengthening the type of reduction used for the underlying graph problem that we reduce from, independent sets in regular planar graphs, or by finding a different reduction for triangulations that bypasses the Turing reductions used for this graph problem?

In a triangulation of a polygon with holes, every hole has a diagonal connecting its leftmost vertex to a vertex to the left of it in another boundary component.
By testing all combinations of these left diagonals, and using dynamic programming to count triangulations of the simple polygon formed by cutting the input along one of these sets of diagonals (avoiding triangulations that use previously-tested diagonals) it is possible to count triangulations of an $n$-vertex polygon with $h$ holes in time $O(n^{h+3})$.  Is the dependence on $h$ in the exponent of $n$ necessary, or is there a fixed-parameter tractable algorithm for this problem?

More generally, there are many other counting problems in discrete geometry for which we neither know a polynomial time algorithm nor a hardness proof. For instance, we do not know the complexity of counting triangulations, planar graphs, non-crossing Hamiltonian cycles, non-crossing spanning trees, or non-crossing matchings of sets of $n$ points in the plane. Are these problems hard?

\section*{Acknowledgements}
A preliminary version of this paper appeared in the Proceedings of the 2019 International Symposium on Computational Geometry. This work was supported in part by the US National Science Foundation under grants  CCF-1618301 and CCF-1616248.

\bibliographystyle{amsplainurl}
\bibliography{counting}
\end{document}